\algrenewcommand\algorithmicrequire{\textbf{Input:}}
\algrenewcommand\algorithmicensure{\textbf{Output:}}
\newtheorem{theorem}{Theorem}
\newtheorem{lemma}{Lemma}
\newtheorem{proposition}{Proposition}
\newtheorem{corollary}{Corollary}
\newtheorem{observation}{Observation}
\theoremstyle{remark}
\newtheorem{remark}{Remark}
\begin{document} 

\title{Spanning-tree-packing protocol for conference key propagation in quantum networks}

\author{Anton Trushechkin}
\email{anton.trushechkin@hhu.de}

\affiliation{Heinrich Heine University D\"{u}sseldorf,
Faculty of Mathematics and Natural Sciences, Institute for Theoretical Physics III, 
Universit\"{a}tsstr. 1, D\"{u}sseldorf 40225, Germany}

\affiliation{Steklov Mathematical Institute of Russian Academy of Sciences, Gubkina 8, Moscow 119991, Russia}

\author{Hermann Kampermann}

\affiliation{Heinrich Heine University D\"{u}sseldorf,
Faculty of Mathematics and Natural Sciences, Institute for Theoretical Physics III, 
Universit\"{a}tsstr. 1, D\"{u}sseldorf 40225, Germany}

\author{Dagmar Bru\ss}
\affiliation{Heinrich Heine University D\"{u}sseldorf,
Faculty of Mathematics and Natural Sciences, Institute for Theoretical Physics III, 
Universit\"{a}tsstr. 1, D\"{u}sseldorf 40225, Germany}

\date{\today}

\begin{abstract}

We consider a network of users connected by pairwise quantum key distribution (QKD) links. Using these pairwise secret keys and public classical communication, the users want to generate a common (conference) secret key at the maximal rate. We propose an algorithm based on spanning-tree packing (a known problem in graph theory) and prove its optimality. This algorithm enables optimal conference key generation in modern quantum networks of arbitrary topology. Additionally, we discuss how it can guide the optimal placement of new bipartite links in the network design.

\end{abstract}

\maketitle

\section{Introduction} 

Quantum key distribution (QKD) is able to provide secure communication \cite{BB84,Gisin,Scarani,PirandolaRev}. This is key in view of the progress in quantum computation observed now \cite{FedorovRev,MartinisRev2024}, but also in view of possible inventions in the field of efficient algorithms that can threaten conventional cryptography. Originally, QKD protocols allow two users to establish a common secret key (bitstring). 
However, the real world is often interested not only in point-to-point communication, but in the communication of many users -- network communication. For the widespread adoption of quantum communication technologies, a well-developed theory of quantum networks is essential.

Several theoretical and experimental works have investigated networks of nodes connected by sources of bipartite or multipartite entangled states  \cite{Epping2016n1,Epping2016n2,Wallnofer2019,DeBone2020,Vicente2021,Sukachev,Avis2023,Gupta2023,Sutcliffe2023,PENexper,Wehner2024}. Multipartite QKD, or quantum conference key agreement, which uses multipartite entangled states has been suggested and analyzed in Refs.~\cite{QCKAreview,Epping2017,SW2023,Pirandola2020,Das2021,Carrara2021,Fedrizzi,CVmultiQKD2024}. In the future, such networks are hoped to constitute a ``quantum internet'' \cite{Kimble,QIntenet-Pirandola,QIntenet-Simon,QIntenet-Wehner,QInternet-book,QInternet2025,QPhotNet2025,QInternet2025Tech}. 

However, in the near future one would expect technologically simpler networks of nodes connected by pairwise QKD, not requiring the creation and manipulation of multipartite entangled states. Such bipartite QKD networks already exist in a number of countries \cite{EvolQKDnet}. Protocols for classical information transmission in QKD networks are under active development \cite{KozubovNet,KikTayFed,TrustedNodePQC} and the general structure of future QKD networks is being investigated \cite{DemoQuanDT,QKDcomplnet}.

Examples of graphs representing such networks are given in Fig.~\ref{FigStarTri}. Vertices correspond to users and an edge between two nodes means that the corresponding pair of users is connected by a point-to-point (bipartite) QKD link. We assume that all QKD links can operate in parallel, although alternative scenarios may also be considered.

A variety of tasks can be studied for such networks. In this work, we focus on the task of generating a common (conference) secret key. It is well known (see, e.g., \cite{Epping2017}) that  users having a connected network of bipartite keys can agree on a common (conference) secret key using classical communication (XOR operations). The question we address is to do it optimally, i.e., to generate the conference key at the maximal rate given the bipartite rates in the network. We refer to this task as optimal \textit{conference key propagation} because we want to propagate pairwise secret keys over the network, resulting in a conference key. 

\begin{figure}
\centering

\includegraphics[scale=1]{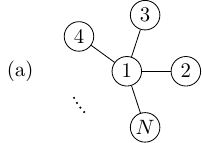}
\qquad\qquad
\includegraphics[scale=1]{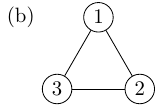}
\vspace{6mm}

\includegraphics[scale=1]{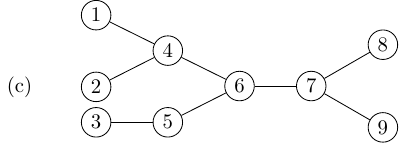}

\caption{Examples of QKD networks: a star network with $N$ users (a), the triangle network (b), and a tree network (c). Vertices (nodes) of a graph represent users and edges represent (bipartite) QKD links between them.}
\label{FigStarTri}
\end{figure}

We propose a protocol of conference key propagation for networks of arbitrary topologies based on the spanning-tree packing (STP). Optimal spanning-tree packing is a known and efficiently solvable problem in graph theory \cite{Diestel,SpanSurvey,Barahona1995}. We prove that the optimal spanning-tree packing gives the optimal conference key propagation in this setting and
can be used in existing QKD networks.

The corresponding results are known in classical multiterminal information theory \cite{PIN2010}. We adapt them to the context of quantum key distribution and quantum networks and consider additional applications in detail. Namely, we discuss the derivation of simple upper bounds for the conference key rate for networks with bottleneck structures. We also explore how this framework can guide the optimal placement of new QKD links in network design.

Note that other approaches to quantum conference key agreement without direct use of multipartite entangled states include the MDI (measurement-device-independent) approach, where the multipartite entangled state is postselected \textit{a posteriori} in a system of detectors \cite{MDIQCKA2025,TFQKD2019,TFQKD2022,TFQKD2022Lo,TFQKD2023,FullyPassiveQCKA,PMQCC,MDI3users,TFQKD2025,MDIQCKASideChSecure}, and simultaneous delivering of the copies of the same quantum state from a sender to many receivers \cite{MultiCVQKD}. 

Our paper is organized as follows. In Sec.~\ref{SecNot}, we state the problem. In Sec.~\ref{SecDescr}, we describe the protocol starting from simple examples. In Sec.~\ref{SecOpt}, we formulate a theorem of its optimality. In Sec.~\ref{SecApp}, we discuss the upper bounds on the conference key rate originating from complex bottleneck structures and a related problem of network optimization. In Sec.~\ref{SecSecurityParam}, we give the formula for the universally composable security parameter $\varepsilon_{\rm conf}$ for the conference key generated by the proposed STP protocol. 
Since the existing efficient algorithm solving the spanning-tree packing contains many steps \cite{Barahona1995}, in Appendix~\ref{SecAlg} we propose a simpler heuristic algorithm for this problem used in the examples of this paper.
In Appendix~\ref{SecProofOpt}, we give a proof of the main theorem.

\section{Problem statement}
\label{SecNot}

A network is represented as a graph $(\mathcal V,\mathcal E)$, where $\mathcal V=\{1,\ldots,N\}$ is a set of vertices (nodes) associated with users and $\mathcal E$ is the set of edges associated with bipartite QKD links between the users. We always assume that the graph is connected, i.e., there is a path between any pair of vertices.

We assume that all QKD links $e\in\mathcal E$ operate in parallel for time $T$ and generate the keys $K_e$ of lengths $L_e$. Formally, $K_e$ are random variables taking values on $\{0,1\}^{L_e}$. Each launch of all QKD links for time $T$ will be called ``a round.'' Then, the bipartite secret key rates are $r_e=L_e/T$. If after $n$ rounds (which take the total time $nT$), we generate a conference key of the length $L_{\rm conf}$, then the conference key rate is given by $r_{\rm conf}=L_{\rm conf}/(nT)$. 

In practice, the bipartite secret key rates are typically integers (e.g., certain number of bits per second), which will be assumed for simplicity. 
However, some of our formulas and calculations do not rely on the property of $r_e$ being integer and are valid also for rational or real $r_e$.  

The task is stated as follows: given a graph $(\mathcal V,\mathcal E)$ and bipartite rates $\{r_e\}_{e\in\mathcal E}$, build a conference key propagation protocol with the maximally possible rate $r_{\rm conf}$. For a formal definition, see Eq.~(\ref{EqConfCapacity}) in Sec.~\ref{SecOpt}.

Thus, we deal with a weighted graph with the edge weights given by the bipartite key rates $r_e$, see Fig.~\ref{FigMulti}(a). Since the weights $r_e$ are integers, we can treat the graph as a multigraph (i.e., graphs where multiple edges between the same pair of vertices are allowed)  with the multiplicity of edges $r_e$, see Fig.~\ref{FigMulti}(b). 

\begin{figure}
\centering

\includegraphics[scale=1]{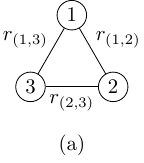}
    {}
    \hspace{10mm}
    {}
\includegraphics[scale=1]{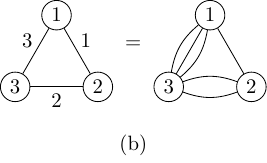}

\caption{Quantum QKD network as a weighted graph with bipartite secret key rates $r_e$ as the weights (a). In the case of integer weights, it can be alternatively represented as the multigraph where edge multiplicities correspond to the weights (b).}
\label{FigMulti}
\end{figure}

Note that, as usual in QKD, we neglect the influence of classical communication and computation times on the key rate because they are several orders of magnitude smaller than the rates of distribution of quantum states, especially over large distances. Indeed, classical signals with macroscopic numbers of photons are much less vulnerable to losses than quantum single-photon states or weak coherent pulses. Thus, the conference key rate is limited mainly by the bipartite key rates $r_e$ and the network topology, while the influence of classical communication and computation times is negligible.

Also we use the standard assumptions in QKD: the parties are trusted and the classical communication channels are public, but authenticated. The latter assumption means  that the classical channels can be freely eavesdropped, but the adversary cannot change the transmitted messages or send their own ones. In practice, this is achieved, e.g., using digital signatures or hash-based authentication.

It is known that keys generated by QKD deviate from being perfectly secure and this deviation is characterized by a security parameter $\varepsilon$ \cite{Portmann2014,Portmann2022}. So, in general, each edge $e$ is  characterized by a security parameter $\varepsilon_e$. We start with perfectly secure bipartite keys and return to the practical case of nonzero $\varepsilon_e$ in Sec.~\ref{SecSecurityParam}.

\section{Description of the protocol}
\label{SecDescr}

\subsection{Simple case: tree networks}
\label{SecTreeNet}

Before we formulate the general protocol, let us start with simple examples. Consider the star network depicted in Fig.~\ref{FigStarTri}(a), where the central node~1 is connected to $N-1$ other nodes $2,\ldots,N$ and there are no more connections in the network. Thus, $\mathcal E=\{(1,2),(1,3),\ldots,(1,N)\}$. In such networks, user~1 is distinguished and often called ``Alice,'' while all others are ``Bobs.''

Also, for simplicity, let us assume that all bipartite secret key rates are $r_{1j}=1$, $j=2,\ldots,N$, i.e., each QKD link generates one bipartite secret bit per round. The conference key is generated as follows, see, e.g., Ref.~\cite{Epping2017}. The participants publicly choose which of the bipartite keys will be the conference key. For example, they choose the key $K_{12}$ as the future conference key. Then Alice encrypts $K_{12}$ as the message to the Bobs $3,\ldots,N$ by the one-time pad encryptions using the corresponding bipartite keys $K_{(1,3)}, K_{(1,4)},\ldots, K_{(1,N)}$. Namely, Alice publicly announces the bitstrings 
\begin{equation}
\label{EqSimpleXORs}
\begin{split}
&K_{(1,2)}\oplus K_{(1,3)},
\\
&K_{(1,2)}\oplus K_{(1,4)},\\ 
&\ldots\\
&K_{(1,2)}\oplus K_{(1,N)}.
\end{split}
\end{equation} 
Here $\oplus$ is XOR (addition modulo~2). The users $3,4,\ldots,N$ can decrypt the ciphertext (apply the operation $\oplus K_{(1,j)}$ for the corresponding $j$ again) to obtain the bit $K_{(1,2)}$, which then can be used as a conference key. Since we have generated one bit of the conference key with one round of bipartite QKD protocols,  $r_{\rm conf}=1$. This is a known obvious solution.

If $r_e$ are different integers, then the conference key rate is upper bounded by the minimal bipartite rate $\min r_e$. Keeping $\min r_e$ bits in each of the keys and repeating the described procedure gives $r_{\rm conf}=\min r_e$. 

This protocol can be easily generalized to any connected graph without cycles, i.e., a tree. Let us again assume that all bipartite secret key rates are $r_e=1$. The protocol is illustrated in Fig.~\ref{FigTree} for the graph given in Fig.~\ref{FigStarTri}(c). Again, the participants publicly agree on which of the bipartite keys $K_{\bar e}$ will be the conference key. Since there are no cycles in the graph, removal of the edge $\bar e=(i,j)$ from the graph breaks the tree graph into two nonconnected subtrees. In Fig.~\ref{FigTree}, the removal of the edge $(6,7)$ breaks the graph into two subtrees depicted by blue and green with vertices~6 and~7 as the roots. Setting $i$ and $j$ as the roots of these two subtrees, we can make the graph oriented and thus to construct unique paths from $i$ or $j$ to every vertex of the corresponding subtree. 

In our example, vertex 6 encrypts the message $K_{\bar e}$ for its children in the corresponding subtree (vertices 4 and 5) using the one-time pad and the corresponding bipartite secret keys. Then the vertices that are not leafs in the subtrees do the same for their children in the subtrees. In our example, vertices 4 and 5 encrypt the future conference key $K_{(5,6)}$ for the users 1, 2, and 3 using the bipartite keys $K_{(1,4)}$, $K_{(2,4)}$, and $K_{(3,5)}$. Vertex 7 and its descendants act in an analogous way. Thus, the conference key $K_{(5,6)}$ propagates along the whole network.

In this version of the protocol, vertices 4 and 5 have to wait for messages from vertex 6 in order to send their messages for vertices 1, 2, and 3. Let us give an equivalent protocol, which is single round, i.e., each participant makes public announcements independently on other announcements. After all announcements have been made, each participant can recover the conference key. 

Namely, each vertex $\alpha$ except the roots $i$ and $j$ of the subtrees (6 and 7 in our example) has exactly one incoming (directed) edge ${\rm in}_\alpha$ and a (possibly empty) set of outcoming edges ${\rm out}_\alpha$. Also, by definition, we put ${\rm in}_i=j$ and ${\rm in}_j=i$. Then each user $k$ with nonempty ${\rm out}_k$ publicly announces the bitstrings 
\begin{equation}\label{EqTreeAnnounce}
C_{e}^{(\alpha)}=K_{{\rm in}_\alpha}\oplus K_{e} 
\end{equation}
for all $e\in{\rm out}_\alpha$. That is, each user encrypts the bitstring $K_{{\rm in}_\alpha}$ using one-time pad and keys $K_e$, $e\in{\rm out}_\alpha$, and announces the ciphertexts. Then, each user, step by step, going back from its vertex to $i$ and $j$, decrypt the ciphertexts and finally recovers the bitstring $K_{\bar e}=K_{(i,j)}$, which can be used as a conference key.

\begin{figure}
\centering
\includegraphics[scale=1]{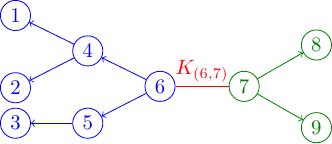}
\caption{Conference key propagation for the tree graph from Fig.~\ref{FigStarTri}(c). The users agree whose bipartite key will be the conference key. Here: $K_{(6,7)}$. Removal of the corresponding edge breaks the graph into two non-connected subgraphs (subtrees, depicted by the blue and green colors) with vertices 6 and 7 as the roots. The propagation of the key $K_{(6,7)}$ proceeds according to the arrows.}
\label{FigTree}
\end{figure}

In our example, the announcements are:
\begin{gather*}
C_{(4,6)}^{(6)}=K_{(6,7)}\oplus K_{(4,6)},
\qquad
C_{(5,6)}^{(6)}=K_{(6,7)}\oplus K_{(5,6)},
\\
C_{(1,4)}^{(4)}=K_{(4,6)}\oplus K_{(1,4)},
\qquad
C_{(2,4)}^{(4)}=K_{(4,6)}\oplus K_{(2,4)},
\\
C_{(3,5)}^{(5)}=K_{(5,6)}\oplus K_{(3,5)},
\\
C_{(7,8)}^{(7)}=K_{(6,7)}\oplus K_{(7,8)},
\qquad
C_{(7,9)}^{(7)}=K_{(6,7)}\oplus K_{(7,9)}.
\end{gather*}
Then, for example, user~1 recovers the conference key rate calculating
\begin{equation*}
K_{(1,4)}\oplus C_{(1,4)}^{(4)}\oplus C_{(4,6)}^{(6)}=K_{(6,7)}.
\end{equation*}
As in the case of the star graph, $L_{\rm conf}=1$ and $r_{\rm conf}=1$. If the bipartite keys are perfectly secure, then, due to perfect security of the one-time pad and the chainlike structure of encryptions (\ref{EqTreeAnnounce}), the conference key is also perfectly secure. For arbitrary bipartite keys $r_e$, we can apply the same reasonings as for the star graph and conclude that $r_{\rm conf}=\min r_e$.

Thus,  we arrive at the following observation:  
\begin{observation}
\label{ObsTree}
A tree of bipartite secret bits leads to one conference secret bit.
\end{observation}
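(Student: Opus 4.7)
The plan is to verify two properties of the construction just given: (i) correctness, that every user ends up with the same bit $K_{\bar e}$, and (ii) perfect secrecy, that $K_{\bar e}$ is uniformly distributed and independent of the collection of public announcements $\{C_e^{(\alpha)}\}$.

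For correctness, I would induct on the graph distance $d(\alpha)$ from a user $\alpha$ to the set $\{i,j\}$ in the tree with the edge $\bar e$ deleted. The base case $\alpha\in\{i,j\}$ holds by definition, since these two users already hold $K_{\bar e}$. For the inductive step, let $\beta$ be the parent of $\alpha$ along its unique path to $\{i,j\}$. By (\ref{EqTreeAnnounce}) the announcement made by $\beta$ for edge $e=(\alpha,\beta)$ is $C_{e}^{(\beta)}=K_{{\rm in}_\beta}\oplus K_{e}$, and XOR'ing this with $K_{e}$ (which $\alpha$ holds) yields $K_{{\rm in}_\beta}$, which by the inductive hypothesis $\beta$ has already been able to recover as $K_{\bar e}$. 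Unwinding, the decryption at $\alpha$ is simply a telescoping XOR along the unique path from $\alpha$ to $\{i,j\}$, and all intermediate $K_{e'}$ cancel pairwise.

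For perfect secrecy, the cleanest approach is to observe that the $\mathbb F_2$-linear map $\Phi$ sending $\{K_e\}_{e\in\mathcal E}\mapsto(K_{\bar e},\{C_{e}^{(\alpha)}\}_{e\ne \bar e})$ is a bijection. Injectivity (hence bijectivity, since source and target have the same dimension $|\mathcal E|$) follows because, given $K_{\bar e}$ together with all announcements, one can reconstruct every $K_e$ by applying the same decryption chain used in the correctness argument but aimed at edge $e$ rather than $\bar e$. Since $\{K_e\}$ is uniformly distributed on $\mathbb F_2^{|\mathcal E|}$, so is its image under $\Phi$; in particular $K_{\bar e}$ is uniform and independent of the public announcements, which is precisely perfect secrecy.

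The only step that needs careful checking is the bijection claim. The key structural input is that each announcement $C_{e}^{(\alpha)}$ is labelled by a unique non-root edge $e\in\mathcal E\setminus\{\bar e\}$ (namely the edge in ${\rm out}_\alpha$ it encrypts), so after ordering edges by distance from $\{i,j\}$ the matrix of $\Phi$ becomes lower-triangular with ones on the diagonal, hence invertible. This acyclicity is essential: as soon as cycles appear, several bipartite keys can enter a single announcement in a way that produces linear dependencies, which is exactly why general graphs require the more elaborate spanning-tree-packing construction developed later in the paper.
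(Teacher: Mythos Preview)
Your proposal is correct and follows the same line as the paper's own justification, which simply invokes the chain-like structure of the encryptions in Eq.~(\ref{EqTreeAnnounce}) together with the perfect security of the one-time pad, without spelling out details. Your induction/telescoping argument for correctness and your bijection argument for secrecy make this precise; the lower-triangular observation is a clean way to see that the change of variables $\{K_e\}\mapsto(K_{\bar e},\{C_e^{(\alpha)}\})$ is invertible over $\mathbb F_2$, which is exactly the content of iterated one-time-pad security.
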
 
For optimal conference key propagation, a general problem is to pack as many trees into a given graph as possible, which is a well-known problem in graph theory. Before we give precise formulations of this problem in Sec.~\ref{SecSTP}, we consider one more instructive simple example in the next subsection.

\subsection{Triangle network}

Now consider the simplest network with a cycle: a triangle network depicted in Fig~\ref{FigStarTri}(b). For simplicity, we still assume that all bipartite key rates are equal to one. How can the users agree on a conference secret key? Of course, they can ignore, for example, the QKD link $(2,3)$ and reduce the problem to the star graph case described in Sec.~\ref{SecTreeNet}, which leads to the conference key rate $r_{\rm  conf}=1$. In general, every graph with cycles can be reduced to a graph without cycles by removing a subset of edges. However, ignoring QKD links might be nonoptimal.

Let us propose an improved protocol. Let us launch two rounds of bipartite QKD links. Then, each link generates two secret bits $K_{e,1}$ and $K_{e,2}$. Then the participants 1,2 and 3 announce, respectively,
\begin{equation}
\begin{split}
C^{(1)}&=K_{(1,2),1}\oplus K_{(1,3),1},\\ 
C^{(2)}&=K_{(1,2),2}\oplus K_{(2,3),1},\\
C^{(3)}&=K_{(1,3),2}\oplus K_{(2,3),2}.
\end{split}
\end{equation}
This information is sufficient for all participants to recover all bipartite keys. Then, they can decide to accept, e.g., $K_{(1,2),1}$, $K_{(1,2),2}$, and $K_{(1,3),2}$ as three bits for the conference key. That is, we have generated $L_{\rm conf}=3$ bits of the conference key using $n=2$ rounds of bipartite links, hence, the conference key rate is
\begin{equation}
r_{\rm conf}=\frac{L_{\rm conf}}{n}=\frac{3}2,
\end{equation}
which is greater than the value $r_{\rm  conf}=1$ obtained by ignoring one of the links. Since the eavesdropper does not know $K_{(1,3),1}$, $K_{(2,3),1}$, and $K_{(2,3),2}$, she has no information about the conference key. Thus, we have obtained a better protocol for conference key propagation, which uses all the bipartite QKD links. In the following, we generalize and formalize the scheme given in this example.

\subsection{General formulation of the STP-based conference key propagation protocol}
\label{SecSTP}

The essence of the protocol for the last example is depicted in Fig~\ref{FigPack} (top row). Consider again the graph $(\mathcal V,\mathcal E)$ with the weights (bipartite key rates) $r_e$, or, equivalently, the corresponding multigraph with the edge multiplicities $r_e$ [see Section~\ref{SecNot} and Fig.~\ref{FigMulti}(b)]. Consider now the sequence of multigraphs $(\mathcal V,\mathcal E,\{nr_e\}_{e\in\mathcal E})$, $n=1,2,\ldots$, with the edge multiplicities $nr_e$ (numbers of bipartite bits generated in $n$ rounds).
Recall that a subgraph of a graph or a multigraph is called a spanning tree if it is a tree (i.e., connected graph without cycles) and contains all vertices of the original graph. We can reformulate Observation~\ref{ObsTree} as follows:
\begin{observation}
A spanning tree of bipartite secret bits in the multigraph $(\mathcal V,\mathcal E,\{nr_e\})$ leads to one conference secret bit.
\end{observation}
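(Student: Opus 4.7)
The plan is to reduce this observation directly to Observation~\ref{ObsTree} by noting that a spanning tree of the multigraph $(\mathcal V,\mathcal E,\{nr_e\})$ is, in particular, a tree whose edges are labeled by individual bipartite secret bits. The multiplicities $nr_e$ only serve as an accounting device: each of the $nr_e$ ``parallel'' edges between the same pair of nodes corresponds to one distinct secret bit produced by the corresponding QKD link over the $n$ rounds. A spanning tree in the multigraph therefore amounts to selecting, for each tree edge, one concrete secret bit, so that the chosen bits connect all $N$ vertices without cycles.

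First I would fix such a spanning tree $T=(\mathcal V,\mathcal E_T)$, where each $e\in\mathcal E_T$ comes with a specific bipartite secret bit $K_e$ drawn from the pool of $nr_e$ bits available for the corresponding link. Then I would invoke exactly the tree protocol described in Sec.~\ref{SecTreeNet}: publicly designate one tree edge $\bar e=(i,j)$ whose bit $K_{\bar e}$ will serve as the conference bit, orient $T\setminus\{\bar e\}$ away from the two roots $i$ and $j$ to obtain the directed structure $({\rm in}_\alpha,{\rm out}_\alpha)$, and have each vertex $\alpha$ with non-empty ${\rm out}_\alpha$ publicly broadcast the ciphertexts $C_e^{(\alpha)}=K_{{\rm in}_\alpha}\oplus K_e$ for $e\in{\rm out}_\alpha$ as in Eq.~(\ref{EqTreeAnnounce}). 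Since $T$ is spanning, every user lies on a unique path to $\{i,j\}$ and can iteratively XOR the ciphertexts along that path to recover $K_{\bar e}$; this is the same argument as the one already verified for the example after Fig.~\ref{FigTree}, now applied to an arbitrary spanning tree of the multigraph.

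For security I would argue that the announcements are exactly the chain-like one-time-pad encryptions of Sec.~\ref{SecTreeNet}: each bit $K_e$ with $e\in\mathcal E_T\setminus\{\bar e\}$ is used as a one-time pad key exactly once, and these bits are assumed perfectly secret and independent of $K_{\bar e}$ (they come from disjoint, independently generated bipartite QKD keys, possibly on the same physical link but corresponding to distinct rounds or distinct bit positions in the multigraph expansion). Hence the public transcript reveals nothing about $K_{\bar e}$ to an eavesdropper, and $K_{\bar e}$ is a uniformly random bit conditioned on all public announcements, i.e.\ a genuine conference secret bit.

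The step that actually requires care, rather than the obvious reuse of Observation~\ref{ObsTree}, is the bookkeeping that turns the multigraph picture into an unambiguous bit-level assignment: one must verify that distinct spanning trees in the multigraph use disjoint copies of each physical QKD bit, so that when this observation is later iterated to obtain the STP protocol no secret bit is reused as a one-time pad in two different encryptions. I would make this precise by indexing the $nr_e$ parallel edges by labels $(e,k)$ with $k\in\{1,\ldots,nr_e\}$ and insisting that a spanning tree selects at most one label per index, and I expect that this labeling is the only non-routine ingredient, since everything else is a direct translation of the tree-network protocol to the selected subgraph.
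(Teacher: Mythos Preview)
Your proposal is correct and matches the paper's approach exactly: the paper introduces this observation explicitly as a reformulation of Observation~\ref{ObsTree}, with no separate proof, so your reduction to the tree protocol of Sec.~\ref{SecTreeNet} is precisely what is intended. Your last paragraph on bit-level bookkeeping for edge-disjoint packings is more than is needed for this single observation (it pertains to the full STP protocol that follows), but it is correct and anticipates the edge-disjointness requirement the paper imposes in Eq.~(\ref{EqRateMultigraph}).
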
 

If we wish to maximize the conference key rate, we need to pack as many spanning trees into the multigraph $(\mathcal V,\mathcal E,\{nr_e\})$ in the following sense:

\begin{equation}
\label{EqRateMultigraph}
    r_{\rm conf}=
    \max_{k,n}
    \Big\{
    \frac kn\,\Big|\,
    \begin{smallmatrix}
    (\mathcal V,\mathcal E,\{nr_e\}) \text{ contains $k$ edge-disjoint}\\ \text{spanning trees}
    \end{smallmatrix}
    \Big\}.
\end{equation}
That is, if we generated the conference key of length $k$ using $n$ rounds of parallel bipartite QKD links, then the conference key rate is $k/n$ and we want to maximize this quantity. In our example with the triangle network depicted in Fig~\ref{FigPack} (top row), the multigraph $(\mathcal V,\mathcal E,\{2r_e\})$ can be partitioned into three edge-disjoint spanning trees, hence the conference key rate is 3/2. Fig.~\ref{FigPack} shows the spanning-tree packing for other graphs also under the assumption of all the bipartite secret key rates are equal to one.

\begin{figure*}
\includegraphics[scale=1]{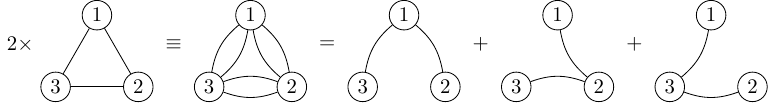}

\vspace{10mm}

\includegraphics[scale=1]{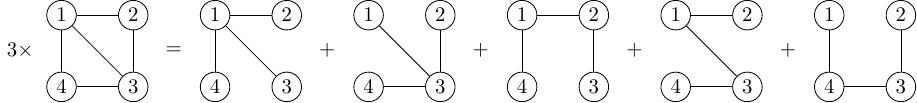}

\vspace{10mm}
\includegraphics[scale=1]{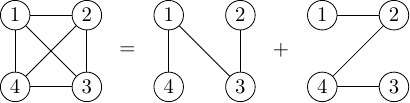}

\vspace{10mm}
\includegraphics[scale=1]{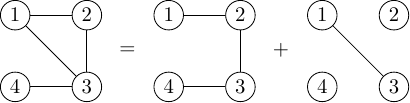}

\caption{Spanning-trees packings in different graphs with equal edge weights (bipartite QKD rates) for all edges (note that this assumption is used only for illustrative purposes in the figures, while the general theory is formulated for arbitrary bipartite key rates $r_e$, see the text). Here it is convenient to think that an edge in a graph represents a bipartite key with a fixed length $L$. Then multiple QKD sessions (factors in front of of the graphs) give multiple edges between the same vertices. One spanning tree correspond to a conference key of the length $L$. Thus, the problem is to find an optimal \textit{spanning-tree packing}, i.e. a set of edge-disjoint spanning trees such that the ratio of the number of edge-disjoint spanning trees to the number of QKD sessions (per each bipartite link) is maximal. The last term in the last line is not a spanning tree, but a leftover edge. The operations $\times$ and $+$ are understood here in the sense of edge multiplicities in a multigraph. If all bipartite secret key rates are equal to one ($L=1$), then the conference key rates $r_{\rm conf}$ for these examples (from top to bottom) are 3/2, 5/3, 2, and 1. We found the spanning-tree packings here using the heuristic described in Appendix~\ref{SecAlg}.
}\label{FigPack}
\end{figure*}

The problem  of finding the maximal number of edge-disjoint spanning trees in a graph or multigraph is called the \textit{spanning-tree-packing problem} and is well-known in graph theory \cite{Diestel,SpanSurvey,Barahona1995}. Its solution, i.e., the maximum itself is  called the \textit{spanning-tree-packing number}. More precisely, the standard formulation of the spanning-tree packing assumes that the multigraph is fixed, which corresponds to the case when the maximization in Eq.~(\ref{EqRateMultigraph}) is performed only over $k$ for a fixed $n$. Since, in our case, we are free to choose the number of bipartite rounds for the generation of the conference key, we use a slightly different formulation.

The security of the conference key obtained by spanning-tree packing relies on the security of the one-time pad and the requirement that spanning trees must be edge disjoint, hence, each bipartite key is used only once.

Of note, an efficient algorithm of finding optimal spanning-tree packing exists \cite{Barahona1995}. Also, in Appendix~\ref{SecAlg}, we give a simple heuristic algorithm. Formally, in contrast to the algorithm from Ref.~\cite{Barahona1995}, it has at least exponential time complexity. Nevertheless, in practice, it allows one to easily find optimal spanning-tree packings for small graphs or graphs with recognizable structures (certain patterns of edges, recognizable clusters, etc.) and was used for all examples in this paper.

The spanning-tree-packing number is given by the Nash-Williams--Tutte theorem \cite{Diestel,Barahona1995,SpanSurvey}. Consider a vertex partition $P=\{\mathcal V_1,\ldots,\mathcal V_p\}$, so that $\mathcal V=\mathcal V_1\sqcup\ldots\sqcup\mathcal V_p$ (recall that the symbol $\sqcup$ denotes the union of disjoint sets), $p\geq2$, and all subsets $\mathcal V_i\subset \mathcal V$ are nonempty. Denote $\mathcal E(P)$ the set of cross edges in the graph $(\mathcal V,\mathcal E)$, i.e., the edges whose vertices belong to different subsets in the partition $P$. Also, according to standard mathematical notations, $|P|=p$ is the number of elements (vertex subsets) in the partition $P$. Then, the multigraph $(\mathcal V,\mathcal E,\{nr_e\})$ has exactly
{}
    \begin{equation}
    \label{EqNashTutten}
        k\equiv L_{\rm conf}^{(n)}
       =
          \min_{
        \begin{smallmatrix}
            \text{vertex}\\
            \text{partitions } P  
        \end{smallmatrix}
        }    
        \bigg\lfloor
        \frac{1}{|P|-1}
        \sum_{e\in\mathcal E(P)} nr_e
        \bigg\rfloor
    \end{equation}
    edge-disjoint spanning trees, which correspond to the conference key length $L^{(n)}_{\rm conf}$ and the conference key rate
    $r_{\rm conf}^{(n)}
        =
        L_{\rm conf}^{(n)}/n.$
For large enough $n$, the argument of the floor function $\lfloor\cdot\rfloor$  is integer and, thus, this function can be removed. This leads to the conference key rate 
{}
\begin{equation}
\label{EqNashTutte}
    r_{\rm conf}
    =
    \min_{
    \begin{smallmatrix}
        \text{vertex}\\
        \text{partitions } P  
    \end{smallmatrix}
    }    
    \frac{1}{|P|-1}
    \sum_{e\in\mathcal E(P)}r_e.
\end{equation}
This is indeed the solution of the optimization problem (\ref{EqRateMultigraph}): as we argued, the rate given by the right-hand side of
Eq.~(\ref{EqNashTutte}) is achieved by choosing a proper $n$ and the rate cannot be greater than the right-hand side of Eq.~(\ref{EqNashTutte}) because $r_{\rm conf}^{(n)}\leq r_{\rm conf}$ for all $n$. In this paper, we will refer to Eq.~(\ref{EqNashTutte}) as the Nash-Williams--Tutte formula.

\section{Optimality of the STP-based protocol}
\label{SecOpt}

We have considered the problem of the optimal spanning-tree packing. Can we propose another conference key propagation protocol (not necessarily related to the spanning-tree packing) for a  network of bipartite QKD links that would give better results than Eqs.~(\ref{EqRateMultigraph}) and (\ref{EqNashTutte})? The answer is negative: optimal spanning-tree packing is also optimal among all possible conference key propagation protocols. In order to prove it rigorously, we need to give a formal problem statement.

The conference key propagation problem is a particular case of secret key distillation in classical networks \cite{Csiszar2004}. Let, again, $K_e$, $e\in\mathcal E$, be secret keys, i.e., random variables uniformly distributed on the sets $\{0,1\}^{r_e}$. 
Denote by $\mathcal E_i$ the set of edges incident to the vertex $i\in\{1,\ldots,N\}$ and by 
\begin{equation}
\label{EqXK}
X_i=(K_e)_{e\in\mathcal E_i}
\end{equation}
the random variable obtained by the user $i$, i.e., the collection of bipartite keys of all its connections. The participant starts from $n$ independent and identically distributed (iid) repetitions of $K_e$ and, thus, $X_i$. Denote $n$ iid repetitions of $X_i$ as $X_i^n$. Each participant $i$ generates also local randomness, i.e., a random variable $R_i$ from an arbitrary (finite) alphabet. Then the participants publicly announce (classical) messages $C_1,\ldots,C_M$, where each $C_j$ is a function of  $(X_i^n,R_i)$ for $i = (j-1\bmod N)+1$ and of the previous messages $C_1,\ldots,C_{j-1}$. Denote $C=(C_1,\ldots,C_M)$. Finally, each participant $i$ calculates its version of the conference key $K_{\rm conf}^{(i)}\in\{0,1\}^m$ as a function of $(X_i^N,R_i,C)$. Actually, this is a protocol with classical local operations and public communication (CLOPC) with a finite number of communication rounds. Following the notations of Ref.~\cite{LOCC}, denote it as ${\rm CLOPC}_{\mathbb N}$, where the subscript means that the number of rounds can be arbitrary large, but finite. 

Denote by $P_{X_1\ldots X_N}$ and $P_{K_{\rm conf}^{(1)}\ldots K_{\rm conf}^{(N)}C}$ the original distribution of $X_1,\ldots,X_N$ and the final distribution of the version of the conference key and the communication, respectively. Then
\begin{equation}
\Lambda(P_{X_1\ldots X_N}^{\otimes n})=
P_{K_{\rm conf}^{(1)}\ldots K_{\rm conf}^{(N)}C}
\end{equation}
for some $\Lambda\in{\rm CLOPC}_{\mathbb N}$. Denote also by $P_C$ the marginal distribution of $C$. The \textit{conference key propagation capacity} is defined as

\begin{multline}
\label{EqConfCapacity}
\overline r_{\rm conf}=\lim_{\varepsilon\to0}
\sup_{
\begin{smallmatrix}
n,m,\\
\Lambda\in{{\rm CLOPC}}_{\mathbb N}
\end{smallmatrix}
}\Big\{\frac mn\Big|\:\frac12\|\Lambda(P_{X_1\ldots X_N}^{\otimes n})
\\-P_{K^{(1)}_{\rm conf}\ldots K^{(N)}_{\rm conf}}^{(m), \,{\rm ideal}}\otimes P_C\|_1\leq\varepsilon\Big\},
\end{multline}
where
\begin{equation}
P_{K^{(1)}_{\rm conf}\ldots K^{(N)}_{\rm conf}}^{(m), \,{\rm ideal}}
(k_1,\ldots,k_N)
=\begin{cases}
2^{-m},&k_1=\ldots=k_N,\\
0,&\text{otherwise}
\end{cases}
\end{equation}
and $\frac12\|\cdot\|_1$ denotes the total variational distance of two probability distributions. Definition (\ref{EqConfCapacity}) is given for a general multipartite probability distribution $P_{X_1\ldots X_N}$, but the special form (\ref{EqXK}) of $X_i$ corresponds to the bipartite network structure. 

Now we can formulate our main theorem.

\begin{theorem}
\label{ThOpt}
The conference key propagation capacity (\ref{EqConfCapacity}) for an arbitrary bipartite network structure (\ref{EqXK}) is equal to the spanning-tree-packing number (\ref{EqNashTutte}).
\end{theorem}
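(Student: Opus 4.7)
I would split the proof into achievability and converse. Achievability is essentially handed to us by Section~\ref{SecSTP}: for any round count $n$, the STP protocol on the multigraph $(\mathcal V,\mathcal E,\{n r_e\})$ produces $L^{(n)}_{\rm conf}$ conference bits via (\ref{EqNashTutten}), and as $n\to\infty$ the rate $L^{(n)}_{\rm conf}/n$ tends to the RHS of (\ref{EqNashTutte}). Since every bipartite key is used exactly once as a one-time pad, the output distribution is exactly uniform and independent of the public transcript $C$, so the total-variation condition in (\ref{EqConfCapacity}) is met with $\varepsilon=0$, giving $\overline r_{\rm conf}\ge$ RHS of (\ref{EqNashTutte}).

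The heart of the theorem is the converse. My plan is to show, for an \emph{arbitrary} vertex partition $P=\{\mathcal V_1,\ldots,\mathcal V_p\}$, the bound $\overline r_{\rm conf}\le \frac{1}{|P|-1}\sum_{e\in\mathcal E(P)}r_e$, after which minimization over $P$ yields Theorem~\ref{ThOpt}. Fix such a $P$ and group the nodes into $p$ ``super-users'', with $W_a=(X_i)_{i\in\mathcal V_a}$. Any $\mathrm{CLOPC}_{\mathbb N}$ protocol on the original graph is in particular a $p$-party protocol on the super-user level. The bipartite keys internal to a single $\mathcal V_a$ are independent of everything outside and may be absorbed into the local randomness $R_a$; the only cross-partition resource is $Z=(K_e)_{e\in\mathcal E(P)}$, a collection of bipartite perfect keys of total length $nR$, where $R=\sum_{e\in\mathcal E(P)}r_e$.

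At this point the theorem reduces to a key lemma: for $p$ parties whose shared randomness consists only of bipartite perfect secret keys of total length $nR$, the conference key capacity is at most $R/(p-1)$. I would prove this by a multiterminal Fano-style calculation. Correctness gives $H(K_{\rm conf}\mid W_a,R_a,C)\le\eta_n$ for each $a$, while approximate uniformity plus the $\varepsilon$-secrecy constraint give $H(K_{\rm conf}\mid C)\ge m-\eta_n$; hence $I(W_a;K_{\rm conf}\mid C)\ge m-O(\eta_n)$ for every $a$. The bipartite structure of $Z$ then enters through the fact that each cross-key lies in exactly two $W_a$'s and each internal key in exactly one, so a careful chain-rule expansion of $\sum_a I(W_a;K_{\rm conf}\mid C)$ (equivalently, of the total correlation $\sum_a H(W_a\mid C)-H(W_1,\ldots,W_p\mid C)$) upgrades the trivial $\sum_a I(W_a;K_{\rm conf}\mid C)\ge pm-O(\eta_n)$ into the cut-set bound $(p-1)m\le nR+O(\eta_n)$. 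Taking $n\to\infty$ and $\varepsilon\to 0$ in (\ref{EqConfCapacity}) then completes the converse.

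The main obstacle I anticipate is precisely this combinatorial $(p-1)$ factor: turning the heuristic ``each super-user has to receive $m$ bits that cannot be broadcast for free'' into a rigorous inequality that correctly accounts for each one-time-pad being usable only once and for the fact that conditioning on $C$ reshapes the multipartite correlations in a nontrivial way. The cleanest shortcut I see is to invoke the Csisz\'ar--Narayan capacity formula $C_{SK}=H(W_1,\ldots,W_p)-R_{CO}(P)$ from Ref.~\cite{Csiszar2004} and verify that, for the bipartite-resource distribution induced by $Z$, the communication-for-omniscience rate $R_{CO}(P)$ evaluates to $H(W_1,\ldots,W_p)-R/(p-1)$. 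A more self-contained alternative is a ``protocol-simulation'' argument showing that any $\mathrm{CLOPC}_{\mathbb N}$ scheme on the super-user graph can be emulated by a fractional STP scheme with vanishing rate loss, thereby collapsing the general converse directly onto the combinatorial Nash--Williams--Tutte bound already proved in Section~\ref{SecSTP}.
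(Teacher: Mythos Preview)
Your ``cleanest shortcut'' via Csisz\'ar--Narayan is exactly the paper's route: the paper applies the CN capacity formula once on the full $N$-user system to obtain the linear program~(\ref{EqLP}), and then proves (Proposition~\ref{ThLP}) that its optimum equals the Nash--Williams--Tutte expression~(\ref{EqNashTutte}). The key step, Lemma~\ref{LemCUpper}, sums the constraints~(\ref{EqLPCons}) over $I_\alpha=[N]\setminus\mathcal V_\alpha$, $\alpha=1,\dots,|P|$, to obtain $Z\le r[\mathcal E(P)]/(|P|-1)$ for every partition $P$; achievability by STP then pins the minimum. Your partition-by-partition super-user reduction followed by CN on the $p$ contracted parties reproduces this same sum of constraints (taken at the finest super-user partition), so the two arguments differ only in packaging. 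One small slip: for the converse you need only $R_{\rm CO}(P)\ge H(W_{[p]})-R/(p-1)$, not equality---equality fails whenever the contracted graph has its own bottleneck.

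Your primary Fano-style plan is a genuinely different and more self-contained route, and it can be made rigorous, but the ``careful chain-rule expansion'' you gesture at needs two ingredients you have not named. First, the total-correlation identity
\[
\sum_{a=1}^{p} I(\tilde W_a;K_{\rm conf}\mid C)=T(\tilde W\mid C)-T(\tilde W\mid K_{\rm conf},C)+I(\tilde W_{[p]};K_{\rm conf}\mid C)\le T(\tilde W\mid C)+m,
\]
where $\tilde W_a=(W_a,R_a)$ and $T(\cdot\mid\cdot)=\sum_a H(\tilde W_a\mid\cdot)-H(\tilde W_{[p]}\mid\cdot)$. Second---and this is the real work---the lemma that $T(\tilde W\mid C)\le T(\tilde W)=nR$ for every $\mathrm{CLOPC}_{\mathbb N}$ transcript $C$, proved by induction on the communication rounds using that each message, conditioned on the past transcript, is a function of a single party's private data. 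Combining these with your Fano bounds gives $p(m-O(\eta_n))\le nR+m$, i.e.\ $(p-1)m\le nR+O(\eta_n)$, as desired. This bypasses CN as a black box but essentially reproves its converse for the bipartite-key source; the paper opts instead to quote CN and do the combinatorics on the resulting LP.
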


The proof is given in Appendix~\ref{SecProofOpt}. It is based on combining the theorem by Csisz\'{a}r and Narayan \cite{Csiszar2004} about the conference secret key capacity (for classical networks) with the Nash-Williams--Tutte formula (\ref{EqNashTutte}). The main technical ingredient is the following proposition:

\begin{proposition}
\label{ThLP}
    Let a graph $(\mathcal V,\mathcal E)$ with $N$ vertices and non-negative real edge weights $\{r_e\}_{e\in\mathcal E}$ be given. The number $r_{\rm conf}$ given by Eq.~(\ref{EqNashTutte}) is equal to the number $Z$ obtained from the following linear program:
    \begin{subequations}
    \label{EqLP}
    \begin{eqnarray}
    &&Z=\sum_{e\in\mathcal E}r_e-R_{\rm CO},
    \label{EqLPC}
\\
    &&R_{\rm CO}=
    \min_{R_1,\ldots,R_N}
    \sum_{i=1}^N 
    R_i 
    \quad
    \text{such that}
    \\
    &&\sum_{i\in I}
    R_i
    \geq
    \sum_{e\in\mathcal E(I)}
    r_{e},
    \quad
    \forall I\subsetneq[N].
    \label{EqLPCons}
\end{eqnarray}
\end{subequations}
Here $[N]=\{1,\ldots,N\}$ and $\mathcal E(I)$ denotes the set of edges of the subgraph induced by the subset of vertices $I$.
\end{proposition}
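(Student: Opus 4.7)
My plan is to prove Proposition~\ref{ThLP} via linear programming duality combined with the fractional form of the Nash--Williams--Tutte theorem already used to derive~(\ref{EqNashTutte}). Writing $f(I):=\sum_{e\in\mathcal E(I)}r_e$ and applying strong LP duality to~(\ref{EqLP}), $R_{\rm CO}$ equals
\begin{equation*}
\max_{\lambda\geq 0}\ \sum_{\emptyset\neq I\subsetneq[N]} \lambda_I\, f(I) \quad\text{s.t.}\quad \sum_{I\ni i}\lambda_I\leq 1\ \forall i\in[N].
\end{equation*}
Hence $Z=r_{\rm conf}$ is equivalent to showing that this dual maximum equals $\sum_e r_e - r_{\rm conf}$, which I would establish as two matching inequalities.

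For the direction $Z\leq r_{\rm conf}$, I would construct a dual-feasible $\lambda$ from a partition $P^*=\{V_1,\ldots,V_p\}$ attaining the minimum in~(\ref{EqNashTutte}): assign $\lambda_{V_k^c}=\frac{1}{p-1}$ for each $k$ and $\lambda_I=0$ otherwise. Each vertex lies outside exactly $p-1$ of the parts, so $\sum_{I\ni i}\lambda_I=1$ and feasibility holds. A direct edge-by-edge count gives $\sum_k f(V_k^c)=(p-1)\sum_e r_e - \sum_{e\in\mathcal E(P^*)}r_e$, so the dual value equals $\sum_e r_e - r_{\rm conf}$ and $R_{\rm CO}\geq \sum_e r_e - r_{\rm conf}$.

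For the reverse direction $Z\geq r_{\rm conf}$, I would exhibit a primal-feasible $R^*$ with $\sum_i R_i^* = \sum_e r_e - r_{\rm conf}$, mirroring the STP protocol of Sec.~\ref{SecSTP}. By the fractional Nash--Williams--Tutte theorem there exist spanning trees $T_1,\ldots,T_m$ of $(\mathcal V,\mathcal E)$ and weights $\alpha_j\geq 0$ with $\sum_j\alpha_j=r_{\rm conf}$ and $\sum_{j:\,e\in T_j}\alpha_j\leq r_e$ for every $e$. For each tree I fix a source edge $\bar e_j$, orient $T_j$ outward from its two endpoints, and contribute $\alpha_j$ to $R^*_{p_j(e)}$ for every non-source edge $e$, where $p_j(e)$ denotes the parent endpoint. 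Setting $r_e^{\mathrm{lo}}:=r_e-\sum_j\alpha_j\mathbf{1}[e\in T_j]\geq 0$, I then pick an arbitrary broadcaster $b(e)\in e$ for each $e$ and add $r_e^{\mathrm{lo}}$ to $R^*_{b(e)}$. Since each tree contributes $N-2$ non-source edges, $\sum_i R_i^* = (N-2)\,r_{\rm conf} + \sum_e r_e^{\mathrm{lo}} = (N-2)\,r_{\rm conf} + \sum_e r_e - (N-1)\,r_{\rm conf} = \sum_e r_e - r_{\rm conf}$.

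The main obstacle is verifying feasibility $R^*(I)\geq f(I)$ for every proper non-empty $I$. A direct expansion gives
\begin{equation*}
R^*(I) - f(I) = \sum_j \alpha_j\bigl(c'_j(I) - \mathbf{1}[\bar e_j\in\mathcal E(I)]\bigr) + \Delta_{\mathrm{lo}}(I),
\end{equation*}
where $c'_j(I)$ counts the non-source edges of $T_j$ crossing $(I,I^c)$ whose parent lies in $I$, and $\Delta_{\mathrm{lo}}(I)\geq 0$ aggregates residual weights of boundary edges whose broadcaster lies in $I$. The crucial tree-combinatorial claim is that when $\bar e_j\in\mathcal E(I)$ both roots of $T_j$ sit inside $I$, so every connected component of $T_j[I^c]$ contains a unique vertex whose tree-parent lies in $I$; hence $c'_j(I)\geq k_j(I^c)\geq 1$. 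Every summand is therefore non-negative, $R^*$ is primal-feasible, and $R_{\rm CO}\leq \sum_e r_e - r_{\rm conf}$, completing the proof.
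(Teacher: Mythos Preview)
Your proof is correct. The direction $Z\le r_{\rm conf}$ is exactly the paper's Lemma~\ref{LemCUpper} rewritten in LP-duality language: your dual-feasible point $\lambda_{V_k^c}=\frac{1}{p-1}$ is precisely the operation of summing the primal constraints~(\ref{EqLPCons}) over $I=[N]\setminus V_k$, which is what the paper does. For the direction $Z\ge r_{\rm conf}$ the paper's primary argument (Corollary~\ref{CorLPNW}) is different from yours: it invokes the Csisz\'ar--Narayan theorem as a black box, arguing that since $Z$ is the conference-key capacity and the STP protocol achieves the Nash--Williams--Tutte value, $r_{\rm conf}\le Z$ follows automatically. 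The paper then supplies, as an optional self-contained alternative (Lemma~\ref{LemTrees2LP}), a direct primal construction, but with a \emph{different} assignment than yours: it sets $R_i=\sum_\alpha w_\alpha(d_i^{(\alpha)}-1)+\tfrac12\sum_{e\ni i}r_e^{\mathrm{lo}}$, symmetric in the two endpoints of every edge, and reduces feasibility to the one-line tree inequality $|\mathcal E_\alpha(I)|+|\mathcal E_\alpha(I,[N]\setminus I)|\ge |I|$. Your orientation-based assignment (charging each non-source edge to its parent and each leftover edge to a single broadcaster) is equally valid and has the pleasant feature of matching, bit for bit, the actual announcements of the protocol in Sec.~\ref{SecTreeNet}; the price is the slightly more delicate rooted-tree component argument you give, whereas the paper's symmetric choice avoids fixing a source edge or orientation and makes the feasibility check a pure edge count.
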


The proof is also given in Appendix~\ref{SecProofOpt}. Note that Proposition~\ref{ThLP} deals with the formula (\ref{EqNashTutte}) and the linear program (\ref{EqLP}), which are well defined for real (not necessarily integer) $r_e$.

Let us give an example of linear program (\ref{EqLP}) for the triangle network with weights from Fig.~\ref{FigMulti}(a):
    \begin{equation}
    \begin{split}
    &Z=r_{(1,2)}+r_{(1,3)}+r_{(2,3)}-R_{\rm CO},
\\
    &R_{\rm CO}=
    \min_{R_1,R_2,R_3}(R_1+R_2+R_3)
    \quad
    \text{such that}
    \\
    &
    R_1+R_2\geq r_{(1,2)},\\
    &R_1+R_3\geq r_{(1,3)},\\
    &
    R_2+R_3\geq r_{(2,3)},\\
    &R_1\geq0,\:R_2\geq0,\:
    R_3\geq0.
\end{split}
\end{equation}
Its solution is
\begin{equation}
\label{EqTriSol}
    Z=
    \begin{cases}
        r_{(1,2)}+r_{(1,3)},
        &\text{if }r_{(1,2)}+r_{(1,3)}\leq r_{(2,3)},\\
        r_{(1,2)}+r_{(2,3)},
        &\text{if }r_{(1,2)}+r_{(2,3)}\leq r_{(1,3)},\\
        r_{(1,3)}+r_{(2,3)},
        &\text{if }r_{(1,3)}+r_{(2,3)}\leq r_{(1,2)},\\
        \frac12[r_{(1,2)}+r_{(1,3)}+r_{(2,3)}], &\text{otherwise.}    
    \end{cases}
\end{equation}
That is, if there exists a bipartite rate that is larger than the sum of the two others, then the conference key rate is limited by the sum of the two smaller rates. Such bottleneck structures will be considered in more detail in Sec.~\ref{SecUpBnd}. Otherwise, the conference key rate is one half of the sum of all bipartite rates. If all three bipartite rates are equal to one, then we recover the result $r_{\rm conf}=Z=3/2$ we obtained before.

The linear program (\ref{EqLP}) results from Csisz\'{a}r and Narayan's theorem applied to our case (a general formulation is given in Sec.~\ref{SecReduction} in Appendix~\ref{SecProofOpt}). The first term in the right-hand side of Eq.~(\ref{EqLPC}) is the total amount of (bipartite) randomness in the network generated per round. In the second term, $R_i$ is the amount of bits announced by the $i$th participant per round such that each participant can recover the private randomness of all other participants (``omniscience''). Then $R_{\rm CO}$  denotes the minimal total amount of announced information for omniscience (``communication for omniscience''). Each constraint (\ref{EqLPCons}) reflects the fact that the participants from $[N]\backslash I$ (if we consider them as a whole) are ignorant only about the bipartite bits that are ``internal'' for the subgraph induced by the vertices $I$. Intuitively, this constraint can be understood as follows: the total amount of information announced by the participants from $I$ cannot be smaller than the total amount of randomness in their subnetwork. In summary, the maximally possible conference key rate is the total amount of randomness in the network minus the amount of publicly revealed information about this randomness.

Thus, Proposition~\ref{ThLP} relates the linear program (\ref{EqLP}) originating from information-theoretic considerations with the graph-theoretic formulas (\ref{EqRateMultigraph}) and (\ref{EqNashTutte}).

\section{Further applications of the \\STP protocol}
\label{SecApp}

The main application of the proposed spanning-tree-packing protocol is the optimal conference key propagation in a network of bipartite QKD links. In this section, we discuss two further applications: upper bounds on the conference key propagation from graph topology and bottleneck structures (subsection~\ref{SecUpBnd}), and network optimization, namely, optimal allocation of new bipartite QKD links in the network design (subsection~\ref{SecNetOpt}). 

\subsection{Upper bounds on conference key propagation from graph topology: complex bottleneck structures}
\label{SecUpBnd}

Direct optimization using the Nash-Williams--Tutte formula (\ref{EqNashTutte}) for calculating the conference key propagation rate requires exponential time in the number of parties $N$ because the number of vertex partitions is exponential. There exists an efficient algorithm solving this problem \cite{Barahona1995}, but it has no simple analytic expression. For practice it is often useful to have simple bounds for the quantity of interest. While a lower bound can be found by finding an arbitrary (generally, suboptimal) spanning-tree packing, the Nash-Williams--Tutte formula gives upper bounds. Namely, it follows that
{}
\begin{equation}
\label{EqRateUpperP}
    r_{\rm conf}
    \leq
    \frac{1}{|P|-1}
    \sum_{e\in\mathcal E(P)}r_e
\end{equation}
for any vertex partition $P$. Consideration of the partition into single-element subsets (we will refer to it as the finest partition) $\mathcal V_i=\{i\}$, $i=1,\ldots,N$, gives

\begin{equation}
\label{EqUpperGood}
    r_{\rm conf}\leq\frac{1}{N-1}\sum_{e\in\mathcal E}r_e.
\end{equation}
That is, if we take an arbitrary vertex partition $P$, for example, the finest partition, then Eq.~(\ref{EqRateUpperP}) or (\ref{EqUpperGood}) gives a simple upper bound. In the rest of this subsection, we discuss how to ``guess'' a vertex partition $P$ which gives the tight bound or at least a good one, and consider examples depicted on Figs.~\ref{FigContracSimple} and \ref{FigContracTri}.

Bound (\ref{EqRateUpperP})  also follows from simple arguments based on Eq.~(\ref{EqRateMultigraph}): $k$ edge-disjoint spanning trees have $k(N-1)$ edges in total. This number cannot exceed the total number $n|{\mathcal E}|=n\sum_{e\in\mathcal E}r_e$ of edges in the multigraph $(\mathcal V,{\mathcal E},\{nr_e\})$. A generalization of these arguments to an arbitrary partition $P$ gives Eq.~(\ref{EqRateUpperP}).

From these arguments, it follows that, if the equality in Eq.~(\ref{EqUpperGood}) holds, i.e.,
\begin{equation}
\label{EqRateGood}
    r_{\rm conf}=\frac{1}{N-1}\sum_{e\in\mathcal E}r_e,
\end{equation}
then all edge capacities (bipartite QKD links) are fully used and there are no ``leftover'' edges. The first three rows of Fig.~\ref{FigPack} give examples of this case.  The bottom row in Fig~\ref{FigPack} is an example of the inverse case. It is obvious that the conference key rate cannot exceed one because the conference key rate cannot exceed the total number of bipartite secret key bits of an arbitrary participant. The figure shows a decomposition of this graph into one spanning tree and a residual edge $(1,3)$. This corresponds to the case that the participants do not use the key $K_{(1,3)}$ in the generation of the conference key. 

The existence of ``leftover'' QKD links so that Eq.~(\ref{EqUpperGood}) is strict, or, in other words, if the minimum in the Nash-Williams--Tutte formula (\ref{EqNashTutte}) is achieved not by the finest partition $\mathcal V_i=\{i\}$, $i=1,\ldots,N$, this indicates  the existence of bottleneck structures in our network, which restrict the conference key rate. The minimum in the last example of Fig.~\ref{FigPack} is achieved by the vertex bipartition into the subsets $\mathcal V_1=\{1,2,3\}$ and $\mathcal V_2=\{4\}$.

Figure~\ref{FigContracSimple} gives another interpretation of Eq.~(\ref{EqRateUpperP}). Instead of partitions, one can think about a \textit{contraction} of the graph \cite{Diestel}. Namely, for a given partition $P=\{\mathcal V_\alpha\}$ consider the graph where the sets $\mathcal V_\alpha$ are vertices and two vertices $\mathcal V_\alpha$ and $\mathcal V_\beta$ are connected by an edge if and only if the set $\mathcal E(\mathcal V_\alpha,\mathcal V_\beta)$ of edges between the vertex subsets $\mathcal V_\alpha$ and $\mathcal V_\beta$ is nonempty. This edge obtains the weight
\begin{eqnarray}
    r_{\alpha\beta}=
    \sum_{e\in\mathcal E(\mathcal V_\alpha,\mathcal V_\beta)}
    r_e.
\end{eqnarray}
Then Eq.~(\ref{EqRateUpperP}) is intuitively clear. A contraction of a subset of vertices into one vertex means that the corresponding participants become actually one participant, i.e., they have common knowledge without cost of communication. Another interpretation: we can set the weights of edges (i.e., the bipartite secret key rates) connecting vertices from the same subset to infinity. This situation is more advantageous for conference key agreement, hence, the conference secret key rate cannot decrease under this transformation.

\begin{figure}
    \centering
    \includegraphics[scale=1]{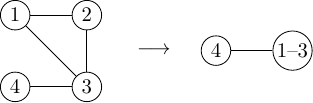}
    \caption{A contraction of the graph from Fig.~\ref{FigPack} (last row) indicating a bottleneck structure, which restricts the conference key rate: connection of node~4 to the rest of the network. The conference key rate in the original graph cannot be larger that the conference (bipartite) key rate in the contracted graph. If the link $(3,4)$ has the rate $r_{(3,4)}=1$, then, for the left graph, $r_{\rm conf}\leq1$.}
    \label{FigContracSimple}
\end{figure}

Generally, consider a bipartition $\mathcal V=\mathcal V_1\sqcup\mathcal V_2$. Then, $|P|=2$ and, according to formula (\ref{EqRateUpperP}), the conference key rate cannot exceed the total capacity of edges between the subsets $\mathcal V_1$ and $\mathcal V_2$. This is an obvious upper bound. The first three lines in the solution (\ref{EqTriSol}) for the triangle network depicted on Fig.~\ref{FigMulti}(a) also reflect bottleneck structures based on bipartitions. For example, if the ``weakest'' part of the network is the connection of vertex~1 to the rest of the network, then the contraction of vertices 2 and 3 gives the tight upper bound. Upper bounds based on bipartitions are used, e.g., in Ref.~\cite{Pirandola2020}. 

However, formula (\ref{EqRateUpperP}) can be used to obtain upper bounds based on more general partitions. An example is given in Fig~\ref{FigContracTri}, where the bottleneck structure is not a connection between two subgraphs, but a triangle. If, as in the previous examples, each edge corresponds to the bipartite key rate 1, then we obtain the upper bound $r_{\rm conf}\leq3/2$ (actually, $r_{\rm conf}=3/2$) originating from a triangle contracted graph, while considerations of only bipartitions give a loose bound $r_{\rm conf}\leq2$.

\begin{figure*}
    \centering
    \includegraphics[scale=1]{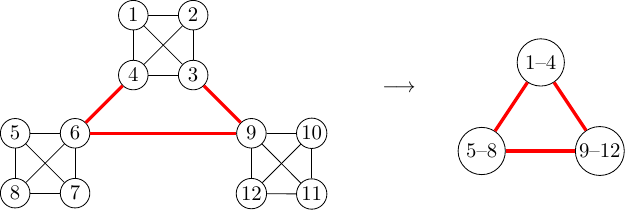}
    \caption{A contraction of a graph indicating a more complex bottleneck structure, than a connection between two subgraphs (like in Fig~\ref{FigContracSimple}): a triangle bottleneck structure. If all bipartite QKD links have the rate $r_e=1$ and, thus, the triangle network with bipartite key rates $r_e=1$ has conference key rate 3/2 (see Fig.~\ref{FigPack}), then, for the left graph, $r_{\rm conf}\leq3/2$.
    }
    \label{FigContracTri}
\end{figure*}

Finally, it is interesting to note that, in order to check that the minimum in Eq.~(\ref{EqNashTutte}) is achieved by the finest partition, i.e.,  Eq.~(\ref{EqRateGood}) is satisfied, one does not need to check all possible partitions, but only partitions of the form $P_I=\{\mathcal V_1,\ldots,\mathcal V_l,\mathcal V_{l+1}\}$, where $I=\{v_1,\ldots,v_l\}\subsetneq[N]$, $\mathcal V_i=\{v_i\}$, $i=1,\ldots,l$, and $\mathcal V_{l+1}=[N]\backslash I$, see Corollary~\ref{CorGoodCond} in Appendix~\ref{SecProofOpt}. In other words, to ensure Eq.~(\ref{EqRateGood}), it is sufficient to check that
\begin{equation}
\label{EqBottleneckSingle}
    \frac{1}{N-1}
    \sum_{e\in\mathcal E}
    r_e
    \leq
    \frac{1}{|I|}
    \sum_{e\in\mathcal E(I)\cup\mathcal E(I,[N]\backslash I)}
    r_e,
\end{equation}
or, equivalently,
\begin{equation}
\label{EqBottleneckSingle2}
    \frac{1}{N-|I|-1}
    \sum_{e\in\mathcal E([N]\backslash I)}
    r_e
    \leq
    \frac{1}{|I|}
    \sum_{e\in\mathcal E(I)\cup\mathcal E(I,[N]\backslash I)}
    r_e
\end{equation}
for all $I\subsetneq [N]$. Here $\mathcal E(I,I')\subset E$ denotes the subset of edges connecting the vertices from $I$ to the vertices from $I'$. As we show in Appendix~\ref{SecProofOpt}, for each $I$, Eqs.~(\ref{EqBottleneckSingle}) and (\ref{EqBottleneckSingle2}) are either both satisfied or both violated. 

Equation~(\ref{EqBottleneckSingle2}) has the following interpretation. 
The left-hand side of it is an upper bound on the conference key propagation in the subnetwork $[N]\backslash I$, see Eq.~(\ref{EqRateUpperP}). Analogously, the right-hand side of Eq.~(\ref{EqBottleneckSingle2}) is the upper bound on the conference key rate for the network where the vertices from $[N]\backslash I$ are contracted into one vertex (hence, the corresponding graph contains $|I|+1$ vertices). The violation of this inequality means that, even if we contract the set of vertices $[N]\backslash I$ into one vertex, the conference key rate in such simplified network is still upper bounded by the upper bound for the subnetwork $[N]\backslash I$. This indicates that the connections of vertices from $I$ with each other and with the other vertices constitute a bottleneck structure.

Thus, the proved equivalence between the Nash-Williams--Tutte formula and the linear program (\ref{EqLP}) reveals properties of the spanning-tree packing.

\subsection{Network optimization}
\label{SecNetOpt}

Let us use the results from the previous subsection to optimize $r_{\rm conf}$ by adding bipartite links to networks in an optimal way. Consider the  problem depicted in Fig~\ref{FigHex} as an example.

\begin{figure}
\centering

\includegraphics[scale=1]{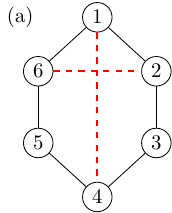}
\qquad\qquad
\includegraphics[scale=1]{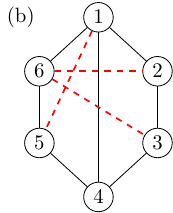}

\caption{Network optimization. All edges here correspond to bipartite secret key rate $r=1$. (a): The initial ringlike network of six nodes gives the conference key rate $r_{\rm conf}=6/5$. If we are allowed to add one link (edge), then which of the two links marked as red-dashed edges is it better to choose? The optimal choice is to add the link $(1,4)$.  (b): If we are  allowed to add one more link, which of the three marked options is optimal? The optimal solution is to add $(2,6)$ or $(3,6)$ because the addition of the link $(1,5)$ leads to a triangle bottleneck structure, see the text.
}
\label{FigHex}
\end{figure}

Initially, we have a ring of six nodes [black solid edges in Fig~\ref{FigHex}(a)]. We assume $r_e=1$ for all edges. Then the initial conference key rate is $r_{\rm conf}=6/5$. Note that the minimum in the Nash-Williams--Tutte formula (\ref{EqNashTutte}) is achieved by the finest partition $\mathcal V=\{1\}\sqcup\ldots\sqcup\{6\}$ and, thus, Eq.~(\ref{EqRateGood}) is satisfied.

Suppose that we are allowed to add one more bipartite QKD link, either $(1,4)$ or $(2,6)$ [red dashed edges in Fig~\ref{FigHex}(a)]. If we add the link $(1,4)$, then the minimum in the Nash-Williams--Tutte formula (\ref{EqNashTutte}) is still achieved by the finest partition and Eq.~(\ref{EqRateGood}) is satisfied, which gives the new conference key rate $r_{\rm conf}=7/5$. 

In contrast, if we choose to add the link $(2,6)$ instead, then the minimum in the Nash-Williams--Tutte formula is achieved by the partition
\begin{equation}
    \mathcal V=\{1,2,6\}\sqcup\{3\}
    \sqcup\{4\}\sqcup\{5\},
\end{equation}
which gives $r_{\rm conf}=4/3$. The conference key rate $4/3$ corresponds to a ring of four nodes (``4-ring''). Thus, the described contraction reveals the bottleneck structure in the form of a 4-ring if the link $(2,6)$ is added.
Adding the link (1,4) leads to a higher rate. 

If we are allowed to add a second additional bipartite QKD link, e.g., $(2,6)$, $(3,6)$, or $(1,5)$ depicted in Fig~\ref{FigHex}(b), then the choices $(2,6)$ and $(3,6)$ [i.e., the links that connect different parts of the graph separated by the central vertical link $(1,4)$] lead to a further increase of the conference key rate to $r_{\rm conf}=8/5$. This again correspond to the finest partition as the optimal one and no bottleneck structures. In contrast, the choice $(1,5)$ leads to the partition
\begin{equation}
    \mathcal V=\{1,4,5,6\}\sqcup\{2\}
    \sqcup\{3\}
\end{equation}
as the optimal one, which corresponds to the contraction of nodes 1, 4, 5, and 6 and, thus, to a triangle bottleneck structure with $r_{\rm conf}=3/2$.

Thus, if we are allowed to allocate a restricted number of bipartite QKD links, bottleneck structures should be avoided.

\section{Security parameter of the STP protocol}
\label{SecSecurityParam}

So far, we assumed the case of perfectly secure bipartite keys. Let us now consider a realistic QKD scenario, where bipartite keys $K_e$, $e\in\mathcal E$, are characterized by the values $\varepsilon_e>0$ of the security parameter, which is based on the trace distance between the real and ideal classical-quantum states after the protocol \cite{Portmann2014,Portmann2022}. Recall that $\varepsilon_e=0$ corresponds to perfect security. Roughly speaking, $\varepsilon_e$ can be associated with the probability that the distributed key is insecure, see Refs.~\cite{Arbekov,Trush2020} for cryptographic operational interpretations.

Consider a tree network like in Fig~\ref{FigTree}. If the number of nodes  in the network is $N$, then any tree contains $N-1$ edges. As we know from Sec.~\ref{SecTreeNet}, the conference key propagation protocol for the tree network can be described as a sequence of one-time-pad encryptions of a chosen bipartite key using the other bipartite keys. During such process, the security parameters of all keys are summed up \cite{BenOrEtAl,Portmann2014,Portmann2022}, hence 
the security parameter of the conference key is
\begin{eqnarray}
    \varepsilon_{\rm conf}=\sum_{e\in\mathcal E}\varepsilon_e.
\end{eqnarray}
If all $\varepsilon_e$ are equal to  $\varepsilon$, then $\varepsilon_{\rm conf}=(N-1)\varepsilon$.

Consider now the general case and formula (\ref{EqRateMultigraph}): a conference key is generated using $k$ spanning trees $T_\beta$, $\beta=1,\ldots,k$. Then, if we form separate conference keys from each spanning tree, then their security parameters are
\begin{equation}
    \varepsilon_{\rm conf}^{(\beta)}=\sum_{e\in T_\beta}\varepsilon_e.
\end{equation}
If we merge all these spanning trees into one long conference key, then we need to sum all the security parameters given above:
\begin{equation}
    \varepsilon=
    \sum_{\beta=1}^k
    \varepsilon_{\rm conf}^{(\beta)}=
    \sum_{\beta=1}^k
    \sum_{e\in T_\beta}\varepsilon_e.
\end{equation}
If all $\varepsilon_e$ are equal to the same number $\varepsilon$, then $\varepsilon^{(\beta)}_{\rm conf}=(N-1)\varepsilon$ and $\varepsilon_{\rm conf}=k(N-1)\varepsilon$. Thus, increasing $k$ and $n$ in Eq.~(\ref{EqRateMultigraph}) allows one to achieve a 
higher asymptotic conference key rate, but decreases its security parameter if we merge the keys obtained from all spanning trees into one long key.

\section{Conclusions and open problems}

We have proposed an optimal solution for the conference key agreement in an arbitrary network of bipartite QKD links. This solution is based on an optimal spanning-tree packing in a graph of bipartite keys. The main tool for proving optimality is the proof of the equivalence between the graph-theoretic Nash-Williams--Tutte formula for the spanning-tree-packing number and a linear program originating from information-theoretic considerations of Csisz\'{a}r and Narayan.

We have shown that the optimality of the STP protocol and the Nash-Williams--Tutte formula can be used for bounding the conference key rate from above and for optimization of QKD link allocations in the network design. Revealing bottlenecks of arbitrary topologies (i.e., originating from arbitrary vertex partitions), not just ``linear'' bottleneck structures (i.e., originating from bipartitions) play a crucial role here.

We highlight three open problems. First,
we assumed that the bipartite key rates $r_e$ are given. However, in practice, not all bipartite QKD links may operate at full capacities simultaneously due to a restricted number of QKD devices. 
This leads to different optimization problems, which include the optimization over the usage of the QKD links (i.e., over $r_e$) with given link capacities and node constraints. Such problems warrant further investigation.

The second open problem concerns the  conference key generation in a subnetwork, where only a subset $\mathcal V'\subset\mathcal V$ of users want to establish a conference key, but other users are trusted and can assist them. Then, instead of spanning trees, we need to consider so called Steiner trees, i.e., trees that contain a given set of vertices (but also may include other vertices). Steiner tree packing for the relative problem of Greenberger-Horne-Zeilinger (GHZ) state distillation in a network of Bell pairs was considered in Refs.~\cite{BaumlAzuma2017,BaumlAzuma2020,BaumlAzuma2021}. Unfortunately, in contrast to the spanning-tree packing, the Steiner tree packing is an NP-hard problem. Nevertheless, does the optimal Steiner tree packing also provide optimal conference key agreement in a subnetwork among all possible protocols?

Finally, it is interesting to generalize these results to the case where genuine multipartite QKD (e.g., based on GHZ states) \cite{QCKAreview} is available between some of the nodes. For example, there are not only bipartite, but also tripartite QKD links in the network. This leads to a transition from the concept of graphs to the concept of hypergraphs. The notions of spanning tree and the spanning-tree packing are naturally generalized for hypergraphs. An open question is its optimality in this more general case.

\section*{Acknowledgments}

This work was funded by the Federal Ministry of Research, Technology and Space BMFTR (Project QuKuK, Grant No.~16KIS1618K).

\appendix

\section{A simple method of finding optimal spanning-tree packing}
\label{SecAlg}

Here we present a method of finding optimal spanning-tree packing for integer edge weights $r_e=L_e$. Formally, this algorithm is not polynomial since it requires checking Eq.~(\ref{EqBottleneckSingle}) for all $I\subsetneq [N]$. Moreover, some steps of this protocol are defined heuristically, nonrigorously. Nevertheless, this simple method allowed us to find optimal spanning-tree packings for the examples considered in this paper without addressing to the complicated algorithm from Ref.~\cite{Barahona1995} and, probably, can be useful also for readers wishing to find spanning-tree packings for small graphs or graphs with recognizable structures (certain patterns of edges, recognizable clusters, etc.).

Consider first the case of integer edge wedge $r_e=L_e$ and no bottlenecks in the network, i.e., Eq.~(\ref{EqRateGood}) is satisfied, or, equivalently, Eq.~(\ref{EqBottleneckSingle}) is satisfied for all $I\subsetneq [N]$. Then we can generate $L_{\rm conf}=\sum_e L_e$ conference key bits in $N-1$ rounds.

The algorithm works then as follows. 

\begin{algorithm}[H]
\caption{Basic algorithm}
\begin{algorithmic}[1]
\Require Graph with integer edge weights $L_e$ satisfying Eq.~(\ref{EqRateGood})
\Ensure Optimal spanning-tree packing
\State $L_{\rm conf} \gets \sum_eL_e$ 
\Comment{Conference key length}
\State $L'_e \gets (N-1)L_e$ for all edges $e$ \Comment{New edge weights}
\For{$\alpha\in\{1,\ldots,L_{\rm conf}-2\}$}
\State Choose a spanning tree $T_\alpha$ with edges of maximal weights $L'_e$
\State $L'_e \gets L'_e-1$ for all $e\in T_\alpha$
\EndFor
\State Choose the last two spanning trees $T_{L_{\rm conf}-1}$ and $T_{L_{\rm conf}}$
\end{algorithmic}
\end{algorithm}

Let us comment the (heuristic) rules of choice of the spanning trees. In each step, we need to choose a spanning tree with edges of possibly maximal weights $L'_e$. Of course, in some cases it is impossible, e.g., when there are $N-1$ edges of the maximal weight, but they do not form a spanning tree. However, generally, the algorithm prescribes to prefer edges with higher weights. The choices of the first $L_{\rm conf}-2$ trees can be arbitrary satisfying this rule. However, the choice of the last but one spanning tree is trickier because, after this choice and the corresponding reduction of the edge weights, the remaining edges of nonzero weight may not form a spanning tree. Hence, the choice of the spanning tree $T_{L_{\rm conf}-1}$ must ensure that the remaining graph is a spanning tree, which will be the last spanning tree $T_{L_{\rm conf}}$. An example of the work of this algorithm is given in Fig.~\ref{FigAlg} (the first and second lines).

Consider now the case when Eq.~(\ref{EqRateGood}) may be not satisfied, or, equivalently, Eqs.~(\ref{EqBottleneckSingle}) may be violated. Then the method is to break the graph into subgraphs until these conditions are satisfied and Algorithm~1 can be applied. Let us rewrite condition (\ref{EqBottleneckSingle}) for an arbitrary graph $(\mathcal V',\mathcal E')$ and integer weights $L_e$:

\begin{equation}
\label{EqBottleneckSingleGen}
    \frac{1}{|\mathcal V'|-1}
    \sum_{e\in\mathcal E'}
    L_e
    \leq
    \frac{1}{|I|}
    \sum_{e\in\mathcal E'(I)\cup\mathcal E'(I,\mathcal V'\backslash I)}
    L_e,
\end{equation}
where the notations $\mathcal E'(I)$ and $\mathcal E'(I,\mathcal V'\backslash I)$ for $I\subset\mathcal V'$ are defined analogously to $\mathcal E(I)$ and $\mathcal E(I,\mathcal V'\backslash I)$. We noticed [see the text after Eq.~(\ref{EqBottleneckSingle2})] that violation of Eq.~(\ref{EqBottleneckSingleGen}) for some $I$ means that the graph with $|I|+1$ vertices where vertices from $\mathcal V'\backslash I$ are contracted into one is a bottleneck structure. This observation suggests to consider two spanning-tree packing problems separately: for the subgraph induced by the vertices $\mathcal V'\backslash I$ and for the aforementioned contracted graph, and then merge the spanning trees. Such splitting of the problem into two problems for smaller graphs works recursively as follows:
\begin{algorithm}[H]
\caption{General algorithm}
\begin{algorithmic}[1]
\Require  Graph $(\mathcal V',\mathcal E')$ with integer edge weights $L_e$
\Ensure Optimal spanning-tree packing
\If{Ineq.~(\ref{EqBottleneckSingleGen}) is satisfied for all  $I\subsetneq\mathcal V'$}
\State Run Basic algorithm for the graph $(\mathcal V',\mathcal E')$
\ElsIf{Ineq.~(\ref{EqBottleneckSingleGen}) is violated for some $I\subsetneq\mathcal V'$}
\State Run General algorithm for the graph $(\mathcal V'',\mathcal E'')$ obtained by the contraction of the vertices from $\mathcal V'\backslash I$ into one vertex
\State Run General algorithm for the subgraph induced by the vertices $\mathcal V'\backslash I$
\State Merge the spanning trees of the graphs from steps 4 and 5 according to the edge correspondence between the input graph $(\mathcal V',\mathcal E')$ and its contraction $(\mathcal V'',\mathcal E'')$
\EndIf
\end{algorithmic}
\end{algorithm}

An example of the work of this algorithm is given in Fig.~\ref{FigAlg} (the third line).

The necessity to check Eqs.~(\ref{EqBottleneckSingle}) for all vertex subsets $I$ and, moreover, further checks of Eqs.~(\ref{EqBottleneckSingleGen}) for subgraphs, make this algorithm inefficient for large graphs. However, it can be applied if the graph contains well-recognizable clusters so that finding $I$ violating Eqs.~(\ref{EqBottleneckSingle}) and (\ref{EqBottleneckSingleGen}) is obvious, or, vice versa, when we can guess that there are no such clusters and Eq.~(\ref{EqRateGood}) is true so that Basic algorithm can be directly applied. Otherwise, one can modify the algorithm to find suboptimal spanning-tree packings. 

So, we can hope that the algorithm can be applied to practical networks as well. There are well-known examples in computer science when formally nonpolynomial algorithms (e.g., the simplex algorithm in linear programming) work well for most practical problems.

\begin{figure*}
\includegraphics[scale=1]{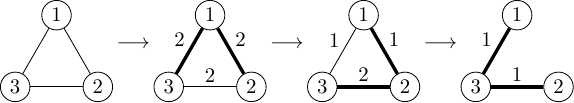}

\vspace{10mm}

\includegraphics[scale=1]{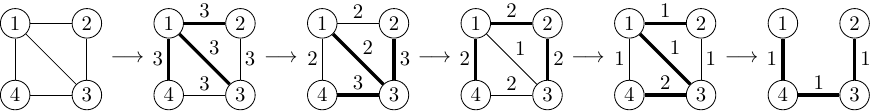}

\vspace{10mm}

\includegraphics[scale=1]{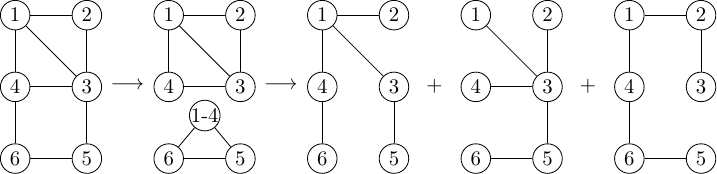}

    \caption{Illustration of the Basic algorithm (first two lines) and the General algorithm (the last line) of finding an optimal spanning-tree packing. All edge weights are equal to one. For the graphs in the first two lines, condition (\ref{EqRateGood}) is satisfied and we know that we can generate a certain number of conference key bits in $N-1$ round ($N-1$ is the number of vertices). We assign new weights $N-1$ to each edge and then, in each step, choose a spanning tree with possibly maximal weights of edges (bold edges). Each use of the edge reduce its weight by one.  The graph in the third line violates Eq.~(\ref{EqBottleneckSingle}) for $I=\{5,6\}$, so, the conference key rate cannot exceed that for the graph where the rest vertices $\{1,2,3,4\}$ are contracted into one, i.e., the triangle graph, or 3/2. Then we solve the spanning-tree packing-problem for both the contracted graph and the subgraph induced by the vertices $\{1,2,3,4\}$ and merge the spanning trees of these graphs according to the edge correspondence  between the original graph and the contracted one. We obtained three spanning trees with not more than two uses of each edge, which gives the rate 3/2. Choices of other spanning trees lead to other optimal solutions.
    }
    \label{FigAlg}
\end{figure*}

Finally, in the case of noninteger edge weights $r_e$, two ways can be considered. The first one is to consider $n$ rounds such that all $nr_e$ are integer.  
The second (heuristic) one is to solve the problem for all weights equal to one (or integers reproducing the proportions between $r_e$ with some precision) and, with the obtained list of spanning trees, solve the linear program (\ref{EqRateGraph}) (see below an equivalent formulation of the optimal spanning-tree-packing problem) optimizing their weights. Note that Eq.~(\ref{EqRateGraph})  is indeed a linear program for $w_\alpha$ if the set of the trees is fixed. The requirement that $w_\alpha$ must be rational is not a restriction since, if $r_e$ are rational, there exists an optimal solution of the linear program with rational $w_\alpha$.

\section{Proof of optimality of the spanning-tree-packing conference key propagation}
\label{SecProofOpt}

In this section we prove Theorem~\ref{ThOpt} and Proposition~\ref{ThLP}.

\subsection{Reduction to a linear program for a multigraph network}
\label{SecReduction}

The basic tool is the information-theoretic result by Czisz\'{a}r and Narayan \cite{Csiszar2004} about the classical conference secret key capacity. One of the models considered in their paper is as follows. 
In each repetition (round),
iid $N$-tuples of random variables $(X_1,\ldots,X_N)$ are delivered to the users. That is, the user $i$ observes the random variable $X_i$. The random variables $X_i$ for different $i$ are generally dependent according to a known joint distribution for $(X_1,\ldots,X_N)$, but the $N$-tuples for different rounds are independent.
The participants communicate over a noiseless public channel. The eavesdropper has no information about $X_i$, but has access to the public communication. The participants want to agree on a common conference secret key with the highest possible rate. See Ref.~\cite{Csiszar2004} for formal definitions, but they match our definitions of the conference key capacity (\ref{EqConfCapacity}). Then the conference secret key capacity is given by
{}
\begin{equation}
\label{EqCN}
\begin{split}
    &Z=H(X_{[N]})-R_{\rm CO},
    \\
    &R_{\rm CO}=
    \min_{R_1,\ldots,R_N}
    \sum_{i=1}^N R_i
    \quad
    \text{such that}
    \\
    &\sum_{i\in I} R_i
    \geq
    H(X_I|X_{[N]\backslash I}),
    \quad \forall I\subsetneq [N],
\end{split}
\end{equation}
where $X_I=(X_i)_{i\in I}$, $H$ denotes the Shannon entropy, $R_i$ is the amount of bits of information per round announced by the $i$th participant, and $R_{\rm CO}$ is the smallest achievable rate of ``communication for omniscience,'' i.e., the minimal total amount of bits of public communication (per round) which allows all participants to recover all iid repetitions of all random variables $X_1,\ldots,X_N$. The information-theoretic meaning of these relations and constraints is given after Theorem~\ref{ThOpt}.

The definition of $X_i$ for our case is described in Sec.~\ref{SecOpt}. Each participant $i$ obtains $\sum_{e\in\mathcal E_i}r_{e}$ bipartite perfectly secure bits per round, where $\mathcal E_i$ is the set of edges incident to the vertex $i$. This bits constitute $X_i$. Of course, different $X_i$ are not independent since, for each edge $e=(i,j)$, $r_e$ bits are included into both $X_i$ and $X_j$. Thus,
\begin{equation}
\begin{split}
    H(X_{[N]})&=\sum_{e\in\mathcal E} r_{e},
    \\
    H(X_I|X_{[N]\backslash I})
    &=\sum_{e\in\mathcal E(I)}
    r_{e},
\end{split}
\end{equation}
from which we obtain the linear program (\ref{EqLP}).
Thus, the proof of Theorem~\ref{ThOpt} has been reduced to Proposition~\ref{ThOpt}, which we prove in the next subsection.

\subsection{Proof of Proposition~\ref{ThLP}}

\begin{lemma}
\label{LemCUpper}
Consider an arbitrary partition $P=\{\mathcal V_1,\ldots,\mathcal V_p\}$ of the vertex set $\mathcal V$ into nonempty subsets, i.e., $\mathcal V=\mathcal V_1\sqcup\ldots\sqcup\mathcal V_p$, $p\equiv |P|\geq2$. Then the following upper bound for $Z$ form linear program (\ref{EqLP}) holds:
\begin{equation}
    \label{EqCUpperP}
    Z\leq\frac{1}{|P|-1}
    \sum_{e\in\mathcal E(P)}r_e,
\end{equation}
where, recall, $\mathcal E(P)$ is the set of cross edges in the graph, i.e., the edges whose vertices lie in different partition subsets.
\end{lemma}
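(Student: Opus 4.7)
The plan is to lower-bound $R_{\rm CO}$ from the linear program (\ref{EqLP}) by summing an appropriate collection of constraints (weak LP duality in disguise), and then convert this into an upper bound on $Z$ via Eq.~(\ref{EqLPC}).

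Given the partition $P=\{\mathcal V_1,\ldots,\mathcal V_p\}$ with $p=|P|\geq 2$, the subsets I would use in the constraints (\ref{EqLPCons}) are the complements
\begin{equation*}
I_\alpha \;=\; \mathcal V \setminus \mathcal V_\alpha, \qquad \alpha=1,\ldots,p.
\end{equation*}
Each $I_\alpha$ is a proper non-empty subset of $[N]$ (since $\mathcal V_\alpha$ is non-empty and $p\geq 2$), so the associated constraint
$\sum_{i\in I_\alpha}R_i \geq \sum_{e\in\mathcal E(I_\alpha)}r_e$ is valid. I would then add these $p$ constraints.

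On the left-hand side, each index $i\in[N]$ belongs to $I_\alpha$ for exactly $p-1$ values of $\alpha$ (all except the unique $\alpha$ with $i\in\mathcal V_\alpha$), so the sum equals $(p-1)\sum_{i=1}^N R_i$. On the right-hand side, one must count, for each edge $e$, the number of $\alpha$ for which both endpoints of $e$ lie in $I_\alpha=\mathcal V\setminus\mathcal V_\alpha$. An intra-partition edge (both endpoints in some $\mathcal V_\beta$) contributes for every $\alpha\neq\beta$, i.e.\ $p-1$ times; a cross-edge $e\in\mathcal E(P)$ with endpoints in distinct $\mathcal V_\beta,\mathcal V_\gamma$ contributes for every $\alpha\notin\{\beta,\gamma\}$, i.e.\ $p-2$ times. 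This gives
\begin{equation*}
(p-1)\sum_{i=1}^N R_i \;\geq\; (p-1)\!\!\sum_{e\notin\mathcal E(P)}\!\! r_e \;+\; (p-2)\!\!\sum_{e\in\mathcal E(P)}\!\! r_e.
\end{equation*}

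Dividing by $p-1$, the inequality becomes
$\sum_i R_i \geq \sum_{e\notin\mathcal E(P)} r_e + \tfrac{p-2}{p-1}\sum_{e\in\mathcal E(P)} r_e$, valid for every feasible $(R_1,\ldots,R_N)$. Hence $R_{\rm CO}$ is bounded below by the same quantity. Substituting into (\ref{EqLPC}) and using $\sum_e r_e=\sum_{e\notin\mathcal E(P)}r_e+\sum_{e\in\mathcal E(P)}r_e$,
\begin{equation*}
Z \;=\; \sum_{e\in\mathcal E} r_e - R_{\rm CO} \;\leq\; \Bigl(1-\tfrac{p-2}{p-1}\Bigr)\!\!\sum_{e\in\mathcal E(P)}\!\! r_e \;=\; \frac{1}{|P|-1}\sum_{e\in\mathcal E(P)} r_e,
\end{equation*}
which is exactly (\ref{EqCUpperP}). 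The only non-routine step is the combinatorial bookkeeping of how many of the $p$ chosen constraints each edge appears in; the different multiplicities ($p-1$ versus $p-2$) for intra- and cross-partition edges are what conspire to leave only the cross-edge sum on the right after cancellation with $\sum_e r_e$.
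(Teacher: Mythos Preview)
Your proof is correct and follows essentially the same approach as the paper: both sum the constraints (\ref{EqLPCons}) over the complements $I_\alpha=[N]\setminus\mathcal V_\alpha$ and observe that the left-hand side equals $(p-1)\sum_i R_i$. The only cosmetic difference is in the right-hand side bookkeeping---you count directly how many $I_\alpha$ contain a given edge ($p-1$ for intra-partition edges, $p-2$ for cross-edges), whereas the paper rewrites $r[\mathcal E(I_\alpha)]=r[\mathcal E]-r[\mathcal E(\mathcal V_\alpha)]-r[\mathcal E(\mathcal V_\alpha,[N]\setminus\mathcal V_\alpha)]$ and sums over $\alpha$; both computations yield the identical expression $(p-1)r[\mathcal E]-r[\mathcal E(P)]$.
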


Note that this observation in a different form was already made in Ref.~\cite{Csiszar2004}. For self-consistency, we provide the proof also here.

\begin{proof}
It will be convenient to introduce the notation
\begin{equation}
    r[\mathcal E']=\sum_{e\in\mathcal E'}r_e
\end{equation}
for an arbitrary subset $\mathcal E'\subset\mathcal E$.
The sum of the constraints (\ref{EqLPCons}) for $I_\alpha=[N]\backslash \mathcal V_\alpha$ for all $\alpha=1,\ldots,|P|$ gives
\begin{multline}
    \label{EqRCObnd}
    (|P|-1)\sum_{i=1}^N R_i
    \geq 
    \sum_{\alpha=1}^{|P|}
    r[{\mathcal E}([N]\backslash\mathcal V_\alpha)]
    \\
    =\sum_{\alpha=1}^{|P|}
    \Big(
    r[{\mathcal E}]
    -
    r[{\mathcal E}(\mathcal V_\alpha)]
    -
    r[{\mathcal E}
    (\mathcal V_\alpha,[N]\backslash\mathcal V_\alpha)|
    \Big)
    \\
    =(|P|-1)r[{\mathcal E}]
    -r[{\mathcal E}(P)],
\end{multline}
where, recall, ${\mathcal E}(I,I')\subset{\mathcal E}$ denotes the subset of edges connecting the vertices from $I$ to the vertices from $I'$. Here we have used that each $R_i$ participates in exactly $|P|-1$ constraints for various $\alpha$,
\begin{equation}
    \sum_\alpha
    r[{\mathcal E}(\mathcal V_\alpha,[N]\backslash \mathcal V_\alpha)]
    =
    2r[{\mathcal E}(P)],
\end{equation}
(since each cross edge is counted twice), and
\begin{equation}
    \sum_\alpha
    r[{\mathcal E}(\mathcal V_\alpha)]
    +
    r[{\mathcal E}(P)]
    =
    r[{\mathcal E}].
\end{equation}
Substitution of the lower bound for $R_{\rm CO}$ from Eq.~(\ref{EqRCObnd}) to Eq.~(\ref{EqLPC}) gives Eq.~(\ref{EqCUpperP}).
\end{proof}

Lemma~\ref{LemCUpper} and the tree packing (Nash-Williams--Tutte) theorem allows us to establish a relation between the optimal solution of the linear program and spanning-tree packings in the multigraph.

\begin{corollary}
\label{CorLPNW}
    The optimal solution of the linear program (\ref{EqLP}) gives the value
    \begin{equation}
    \label{EqCTreePackTh}
        Z=\min_{
        \begin{smallmatrix}
            {\rm vertex}\\
            {\rm partitions}\:P
        \end{smallmatrix}
        }
        \frac{1}{|P|-1}
        \sum_{e\in\mathcal E(P)}r_e,
    \end{equation}
    which coincides with the Nash-Williams--Tutte formula (\ref{EqNashTutte}). 
\end{corollary}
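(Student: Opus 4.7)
The plan is to build on Lemma~\ref{LemCUpper}, which already gives $Z \leq c := \min_P \frac{1}{|P|-1}\sum_{e \in \mathcal E(P)}r_e$, and to establish the reverse inequality by an explicit construction. Writing $W := \sum_e r_e$, I would exhibit a primal-feasible $\{R_i\}$ of \eqref{EqLP} with $\sum_i R_i = W - c$; this forces $R_{\mathrm{CO}} \leq W - c$, whence $Z = W - R_{\mathrm{CO}} \geq c$, and the two bounds give the claimed equality.

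The construction uses the fractional version of the Nash--Williams--Tutte theorem [i.e., Eq.~\eqref{EqNashTutte} applied to the multigraph $(\mathcal V, \mathcal E, \{n r_e\})$ for large enough $n$ and rescaled], which provides spanning trees $T_\alpha$ with nonnegative weights $w_\alpha$ such that $\sum_\alpha w_\alpha = c$ and $\sum_{\alpha: e \in T_\alpha} w_\alpha \leq r_e$ for every $e$; write $r_e^\ell := r_e - \sum_{\alpha: e \in T_\alpha} w_\alpha \geq 0$ for the leftover on $e$. In each tree $T_\alpha$ I fix a ``secret'' edge $\bar e_\alpha = (i_\alpha, j_\alpha)$ and orient $T_\alpha$ toward the two roots $i_\alpha, j_\alpha$ as in Sec.~\ref{SecTreeNet}, so every $v \notin\{i_\alpha, j_\alpha\}$ has a unique parent edge; let $k_\alpha(v)$ denote the number of children of $v$ in this orientation. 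Define
\begin{equation*}
R_i := \sum_\alpha w_\alpha\, k_\alpha(i) + (\text{weight of leftover edges assigned to endpoint } i),
\end{equation*}
where each leftover edge $e=(u,v)$ contributes its weight $r_e^\ell$ to exactly one of $u,v$ (the choice fixed arbitrarily). Since $\sum_v k_\alpha(v) = N-2$ (the parent-child edges of $T_\alpha$) and $\sum_e r_e^\ell = W - c(N-1)$, direct summation gives $\sum_i R_i = c(N-2) + W - c(N-1) = W - c$.

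To verify feasibility $\sum_{i \in I} R_i \geq r[\mathcal E(I)]$ for $I \subsetneq [N]$, split $P_\alpha(I) := \sum_{v \in I} k_\alpha(v)$ by edge type as $P_\alpha(I) = |T_\alpha \cap \mathcal E(I)| - \delta_\alpha^I + X_\alpha^I$, where $\delta_\alpha^I := 1$ if $\bar e_\alpha \in \mathcal E(I)$ (else $0$) and $X_\alpha^I$ counts parent-child edges of $T_\alpha$ with parent in $I$ and child in $[N]\setminus I$. Using $\sum_\alpha w_\alpha[e \in T_\alpha] = r_e - r_e^\ell$ and noting that leftover edges of $\mathcal E(I)$ necessarily contribute their full weight to $\sum_{i \in I} R_i$, the bookkeeping produces
\begin{equation*}
\sum_{i \in I} R_i \;\geq\; r[\mathcal E(I)] \;+\; \sum_\alpha w_\alpha\bigl(X_\alpha^I - \delta_\alpha^I\bigr),
\end{equation*}
so the proof reduces to the per-tree combinatorial inequality $X_\alpha^I \geq \delta_\alpha^I$, which I expect to be the main obstacle. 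The nontrivial case is $\delta_\alpha^I = 1$: both endpoints of $\bar e_\alpha$ lie in $I$, yet $I \subsetneq [N]$ guarantees a vertex $v \notin I$, which must lie in one of the two subtrees obtained by deleting $\bar e_\alpha$ (say the one rooted at $i_\alpha$). The unique path from $i_\alpha$ (in $I$) to $v$ (outside $I$) inside that subtree must exit $I$ at a first parent-child edge $(p, c)$ with $p \in I$, $c \notin I$, giving $X_\alpha^I \geq 1$ as required. Hence $\{R_i\}$ is feasible, $R_{\mathrm{CO}} \leq W - c$, and combining with Lemma~\ref{LemCUpper} yields $Z = c$, which is the Nash--Williams--Tutte value \eqref{EqNashTutte}.
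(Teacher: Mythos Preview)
Your argument is correct and follows the same overall strategy as the paper: the upper bound $Z\le c$ comes from Lemma~\ref{LemCUpper}, and the matching lower bound comes from exhibiting a primal-feasible $\{R_i\}$ with total $W-c$, built out of an optimal fractional spanning-tree packing guaranteed by the Nash--Williams--Tutte theorem. The paper's proof of the Corollary itself takes a slightly more indirect route---it argues achievability of $c$ via the STP protocol and invokes Csisz\'ar--Narayan, postponing the explicit feasible point to Lemma~\ref{LemTrees2LP}---whereas you fold that construction directly into the Corollary.

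The explicit witness you use also differs from the paper's. In Lemma~\ref{LemTrees2LP} the paper takes the root-free, symmetric choice $R_i=\sum_\alpha w_\alpha(d_i^{(\alpha)}-1)+\tfrac12\sum_{e\ni i}r_e^{\ell}$ and verifies feasibility via the tree inequality $|\mathcal E_\alpha(I)|+|\mathcal E_\alpha(I,[N]\setminus I)|\ge |I|$. You instead orient each $T_\alpha$ relative to a chosen edge $\bar e_\alpha$, set $R_i$ from child-counts $k_\alpha(i)$ and assign each leftover edge to a single endpoint, reducing feasibility to the per-tree bound $X_\alpha^I\ge\delta_\alpha^I$, which you settle with a clean path argument. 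Both constructions give the same total $\sum_iR_i=W-c$; yours is closer in spirit to the actual protocol announcements of Sec.~\ref{SecTreeNet}, while the paper's is slightly simpler to state because it avoids choosing $\bar e_\alpha$ and an orientation.
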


\begin{proof}
    By Lemma~\ref{LemCUpper}, $Z$ cannot exceed the right-hand side of Eq.~(\ref{EqCTreePackTh}). From the other side, the the right-hand side of Eq.~(\ref{EqCTreePackTh}) is achievable by the spanning-tree-packing protocol precisely in view of the Nash-Williams--Tutte formula (\ref{EqNashTutte}). Note that it is valid also for real $r_e$. Indeed, for $n$ rounds, we can consider the multigraph $(\mathcal V,\mathcal E,\{\lfloor nr_e\rfloor\})$. Its spanning-tree-packing number $L_{\rm conf}^{(n)}$ is given by  Eq.~(\ref{EqNashTutten}) with $nr_e$ replaced by $\lfloor nr_e\rfloor$. The limit of $L_{\rm conf}^{(n)}/n$ as $n\to\infty$ gives again Eq.~(\ref{EqNashTutte}).   
    Hence, the right-hand side of Eq.~(\ref{EqCTreePackTh}) is also (asymptotically) achievable and, hence, optimal.
\end{proof}
This finishes the proof Proposition~\ref{ThLP} and Theorem~\ref{ThOpt}.

\begin{remark}
    In the proof of Corollary~\ref{CorLPNW}, we implicitly assumed that the conference key propagation protocol based on the spanning-tree-packing problem (\ref{EqRateGraph}) gives a feasible solution to the linear program. It follows from the results of Csisz\'{a}r and Narayan: Any feasible algorithm of the conference key generation from bipartite secret keys must satisfy the restrictions of  linear program (\ref{EqLP}). 
    However, it is instructive to show it explicitly, which will be done in the next subsection.
\end{remark}

Finally, let us consider the case when the minimum in Nash-Williams--Tutte formula (\ref{EqNashTutte}) is achieved by the finest partition $P_{\rm finest}=\{\{1\},\ldots,\{N\}\}$, i.e.,  Eq.~(\ref{EqRateGood}) is satisfied. According the Nash-Williams--Tutte formula, we need to prove that 
\begin{equation}
\label{EqGoodCondPre}
    \frac{r[\mathcal E]}{N-1}
    \leq
    \frac{r[\mathcal E(P)]}{|P|-1}
\end{equation}
for all partitions $P$. However, it turns out that it is sufficient to check Eq.~(\ref{EqGoodCondPre}) only for partitions of the form $P_I=\{\mathcal V_1,\ldots,\mathcal V_l,\mathcal V_{l+1}\}$, where $I=\{v_1,\ldots,v_l\}\subsetneq[N]$, $\mathcal V_i=\{v_i\}$, $i=1,\ldots,l$, and $\mathcal V_{l+1}=[N]\backslash I$. Application of Eq.~(\ref{EqGoodCondPre}) to this particular form of a partition gives
\begin{equation}
\label{EqGoodCond}
    \frac{r[\mathcal E]}{N-1}
    \leq
    \frac{r[{\mathcal E}(I)]
    +
    r[{\mathcal E}(I,[N]\backslash I)]}
    {|I|}.
\end{equation}
We can prove the following:

\begin{proposition}
\label{CorGoodCond}
Equation~(\ref{EqGoodCondPre}) is true for all vertex partitions $P$ iff Eq.~(\ref{EqGoodCondPre}) is true for all partitions $P_I$ of the form given above (i.e., for all nonempty $I\subsetneq[N]$).
\end{proposition}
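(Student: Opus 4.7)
The forward implication is immediate, since every $P_I$ is a particular partition, so I would only address the backward direction. The plan is to take an arbitrary vertex partition $P=\{\mathcal V_1,\ldots,\mathcal V_p\}$ (with $p\geq 2$) and derive the required inequality~(\ref{EqGoodCondPre}) from the $p$ hypotheses obtained by applying Ineq.~(\ref{EqGoodCond}) to the subsets $I^{(j)}=[N]\setminus\mathcal V_j$ for $j=1,\ldots,p$. These $I^{(j)}$ are nonempty proper subsets of $[N]$, so the hypothesis applies.

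The key algebraic observation is that each application of~(\ref{EqGoodCond}) to $I^{(j)}$ converts into an upper bound on the amount of randomness internal to the part $\mathcal V_j$. Specifically, since
\begin{equation*}
r[\mathcal E(I^{(j)})]+r[\mathcal E(I^{(j)},\mathcal V_j)]=r[\mathcal E]-r[\mathcal E(\mathcal V_j)],
\end{equation*}
Ineq.~(\ref{EqGoodCond}) rearranges to
\begin{equation*}
r[\mathcal E(\mathcal V_j)]\leq\frac{|\mathcal V_j|-1}{N-1}\,r[\mathcal E].
\end{equation*}
Summing this over $j=1,\ldots,p$ and using $\sum_j(|\mathcal V_j|-1)=N-p$ together with the identity $\sum_j r[\mathcal E(\mathcal V_j)]=r[\mathcal E]-r[\mathcal E(P)]$ gives
\begin{equation*}
r[\mathcal E]-r[\mathcal E(P)]\leq\frac{N-p}{N-1}\,r[\mathcal E],
\end{equation*}
which rearranges exactly to $\frac{r[\mathcal E]}{N-1}\leq\frac{r[\mathcal E(P)]}{|P|-1}$, i.e., Ineq.~(\ref{EqGoodCondPre}) for the arbitrary partition $P$.

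There is no real obstacle here: the proof is purely combinatorial accounting, and the only care needed is in verifying the bookkeeping identities for how each edge is counted (internal to some $\mathcal V_j$ versus cross-edge of $P$, and analogous for $P_{I^{(j)}}$). The clean cancellation $\sum_j(|\mathcal V_j|-1)=N-p$ is what makes the telescoping exact, explaining why the "one lumped part plus singletons" family $\{P_I\}$ is already sufficient to certify Eq.~(\ref{EqRateGood}).
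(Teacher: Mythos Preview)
Your proof is correct and, in fact, more direct than the paper's. The paper proves the backward direction indirectly via the linear program~(\ref{EqLP}): it exhibits the explicit candidate $R_i=r[\mathcal E_i]-\tfrac{r[\mathcal E]}{N-1}$, observes that the LP constraint~(\ref{EqLPCons}) for this choice is equivalent to~(\ref{EqGoodCond}), and then invokes Corollary~\ref{CorLPNW} (the equality of the LP value $Z$ with the Nash--Williams--Tutte minimum) to conclude that~(\ref{EqGoodCondPre}) holds for every partition. Your argument bypasses the LP entirely: applying~(\ref{EqGoodCond}) to each complement $I^{(j)}=[N]\setminus\mathcal V_j$ yields the per-block bound $r[\mathcal E(\mathcal V_j)]\le\tfrac{|\mathcal V_j|-1}{N-1}\,r[\mathcal E]$, and summing gives~(\ref{EqGoodCondPre}) directly. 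This is a self-contained combinatorial argument that does not depend on Proposition~\ref{ThLP} or the Csisz\'ar--Narayan machinery, whereas the paper's route has the side benefit of simultaneously exhibiting the optimal $R_i$ achieving~(\ref{EqRateGood}) in the LP.
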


\begin{proof}
In one direction, the statement is obvious: since Eq.~(\ref{EqGoodCondPre}) is a particular case of Eq.~(\ref{EqGoodCondPre}), then, if Eq.~(\ref{EqGoodCondPre}) is satisfied for all vertex partitions $P$, then  Eq.~(\ref{EqGoodCondPre}) is satisfied for all partitions $P_I$. Let us prove the other direction.

Put by definition
    \begin{equation}
        \label{EqGoodRi}
        R_i=r[\mathcal E_i]-\frac{r[\mathcal E]}{N-1}.
    \end{equation}
Substitution of Eq.~(\ref{EqGoodRi}) into Eq.~(\ref{EqLPCons}) for an arbitrary $I$ gives
\begin{equation}
    2r[\mathcal E(I)]
    +r[\mathcal E(I,[N]\backslash I)]-
    \frac{|I|}{N-1}
    r[\mathcal E]
    \geq r[\mathcal E(I)],
\end{equation}
from which
Eq.~(\ref{EqGoodCond}) follows. Hence, if all Eq.~(\ref{EqGoodCond}), we have an explicit solution satisfying all constraints (\ref{EqGoodCondPre}) and, as it follows from the proved results and also can be checked explicitly, gives $Z=r[\mathcal E]/(N-1)$.

\end{proof}

Note that Eqs.~(\ref{EqGoodCond}) are trivially satisfied as equalities for $I=[N]\backslash\{k\}$, $k=1,\ldots,N$. Actually, Eq.~(\ref{EqGoodRi}) is a solution of the the system of $N$ linear equations obtained by replacement of ``$\geq$'' by ``$=$'' in constraints (\ref{EqLPCons}) for $I=[N]\backslash\{k\}$, $k=1,\ldots,N$. We obtained an explicit form of $R_i$ yielding the optimal value for the linear program if there are no bottleneck structures in the network.

In the main text, we also use the equivalent form of Eq.~(\ref{EqGoodCond}):
\begin{equation}
\label{EqGoodCondAlt}
    \frac{r[\mathcal E([N]\backslash I)]}{N-|I|-1}
    \leq
    \frac{r[\mathcal E(I)]+r[\mathcal E(I,[N]\backslash I)]}{|I|}.
\end{equation}
Indeed, Eq.~(\ref{EqGoodCond}) can be rewritten as
\begin{multline}
    \frac{r[\mathcal E(I)]+
    r[\mathcal E(I,[N]\backslash I)]
    +r[\mathcal E([N]\backslash I)]}{N-1}
    \\
    \leq 
    \frac{r[\mathcal E(I)]+r[\mathcal E([N]\backslash I)]}{|I|},
\end{multline}
from which Eq.~(\ref{EqGoodCondAlt}) follows.

\subsection{Spanning-tree-packing protocol satisfies the information-theoretic constraints}

Let us show that the conference key rate given by the Nash-Williams--Tutte formula (\ref{EqNashTutte}) does not exceed $C$ from (\ref{EqLP}), i.e., the information-theoretic constraints (\ref{EqLPCons}) are satisfied. In principle, we do not need to prove this because it is a direct consequence of Csisz\'{a}r and Narayan's result: the rate of any conference key propagation algorithm cannot exceed the information-theoretic bound (\ref{EqLP}). However, it is instructive to give a direct and constructive proof.

We will use  another (equivalent) formulation of the spanning-tree-packing problem~\cite{Barahona1995}. For a given graph $(\mathcal V,\mathcal E)$ and ``edge capacities'' (in our case -- bipartite key rates) $\{r_e\}_{e\in\mathcal E}$, we need to choose a finite set of spanning trees $\{T_\alpha\}$ of this graph  and assign rational weights $\{w_\alpha\geq0\}$ to them such that

\begin{equation}
\label{EqRateGraph}
\begin{split}
    &\sum_\alpha w_\alpha=r_{\rm conf}\to\max_{\{w_\alpha\}},
    \\
    &\sum_{\alpha\colon T_\alpha\ni e}
    w_\alpha \leq r_e,\quad \text{for all }e\in\mathcal E,
\end{split}
\end{equation}
where $T_\alpha\ni e$ means that the spanning tree $T_\alpha$ contains the edge $e$. The constraints in Eq.~(\ref{EqRateGraph}) mean that each bipartite secret bit from each $r_e$ can be used only in one spanning tree. In other words, if we generate one conference bit using the tree $T_\alpha$, we spend one bit from each of the edges constituting this tree. The weight $w_\alpha$ is the average number of usages of the tree $T_\alpha$ per round. For example, the spanning-tree packings depicted in the first line of Fig.~\ref{FigPack}, correspond to $\alpha\in\{1,2,3\}$ (indices of three spanning trees) and $w_\alpha=1/2$ for each $
\alpha$: each spanning tree is used once per two rounds. In the second and the third lines of Fig.~\ref{FigPack}, we have $w_\alpha=1/3$ for each of the five spanning trees and $w_\alpha=1$ for each of the two spanning trees, respectively. The constraints in Eq.~(\ref{EqRateGraph}) mean that, for each edge, the total number of usages of the conference bits cannot exceed its ``capacity'' (number of bipartite bits generated per round) $r_e$. 

Since, as we mentioned in Sec.~\ref{SecSTP}, we use a slightly different formulation of the spanning-tree-packing problem in comparison with the standard one, let us prove the equivalence of Eqs.~(\ref{EqRateMultigraph}) and (\ref{EqRateGraph}).

\begin{observation}
The optimization problems (\ref{EqRateMultigraph}) and (\ref{EqRateGraph}) are equivalent.
\end{observation}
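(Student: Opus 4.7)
The plan is to prove equivalence by establishing that any feasible solution to one formulation induces a feasible solution to the other with the same (or arbitrarily close) objective value. Since the objective of (\ref{EqRateGraph}) is a linear program over finitely many spanning trees of $(\mathcal V,\mathcal E)$ with rational constraint coefficients (recall $r_e$ are integers), its optimum is attained at a rational vector $(w_\alpha)$, so it suffices to compare the two optimal values.

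First I would show that any $(k,n)$ feasible for (\ref{EqRateMultigraph}) yields weights feasible for (\ref{EqRateGraph}) with the same rate $k/n$. Given $k$ edge-disjoint spanning trees $T_1,\ldots,T_k$ in the multigraph $(\mathcal V,\mathcal E,\{nr_e\})$, each $T_\alpha$ is a spanning tree of the underlying simple graph $(\mathcal V,\mathcal E)$. I would group identical trees: let $\{\widetilde T_\beta\}$ be the distinct spanning trees appearing among the $T_\alpha$, and let $m_\beta$ denote the number of times $\widetilde T_\beta$ occurs. Set $w_\beta = m_\beta/n$. Then for each edge $e \in \mathcal E$, the multigraph edge-disjointness says that at most $nr_e$ of the $T_\alpha$ use $e$, i.e.\ $\sum_{\beta:\widetilde T_\beta \ni e} m_\beta \le n r_e$, which rearranges to the constraint $\sum_{\beta:\widetilde T_\beta \ni e} w_\beta \le r_e$ in (\ref{EqRateGraph}). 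The objective value is $\sum_\beta w_\beta = k/n$.

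Conversely, I would show that any rational feasible $(w_\alpha)$ for (\ref{EqRateGraph}) yields a feasible $(k,n)$ for (\ref{EqRateMultigraph}) with the same rate $\sum_\alpha w_\alpha$. Let $n$ be a common multiple of the denominators of the $w_\alpha$ (finitely many); then $k_\alpha := n w_\alpha \in \mathbb Z_{\ge 0}$. Build an edge-disjoint collection in $(\mathcal V,\mathcal E,\{nr_e\})$ by taking $k_\alpha$ copies of $T_\alpha$ for each $\alpha$. For every edge $e$, the total number of copies containing $e$ equals $\sum_{\alpha:T_\alpha\ni e} k_\alpha = n \sum_{\alpha:T_\alpha\ni e} w_\alpha \le n r_e$, which is precisely the edge multiplicity of $e$ in $(\mathcal V,\mathcal E,\{nr_e\})$, so the $k := \sum_\alpha k_\alpha$ trees are edge-disjoint in the multigraph. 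The achieved rate is $k/n = \sum_\alpha w_\alpha$.

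The two directions together show that the optimal values of (\ref{EqRateMultigraph}) and (\ref{EqRateGraph}) coincide. I do not anticipate any genuine obstacle: the argument is essentially a rescaling and bookkeeping exercise, and rationality of the weights, together with the finiteness of the set of spanning trees in $(\mathcal V,\mathcal E)$, ensures that LP optima are attained and that the common denominator $n$ exists. The only small care needed is to fold repeated spanning trees into a single term with combined weight when passing from the multigraph side to the LP side.
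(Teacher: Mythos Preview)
Your proof is correct and follows essentially the same approach as the paper: both directions proceed by grouping identical spanning trees (the paper phrases this as a many-to-one map $f$ from multigraph trees to simple-graph trees) and then rescaling by $1/n$, respectively clearing denominators. Your version is slightly more explicit in checking the edge constraints, but the underlying argument is the same rescaling/bookkeeping exercise.
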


\begin{proof}
Indeed, from one side, consider a set of spanning trees $\{\widetilde T_\beta\}_{\beta=1}^k$ in the multigraph $(\mathcal V,{\mathcal E},\{nr_e\})$. We can obviously map them into a set of spanning trees in the graph $(\mathcal V,\mathcal E)$ by just ignoring the edge multiplicities. However, different spanning trees in the multigraph can correspond to the same spanning tree in the graph $(\mathcal V,\mathcal E)$, see Fig.~\ref{FigMulti2One}.  Denote this map as $f$. Thus, $f$ is in general a many-to-one correspondence. Denote $f^{-1}(T_\alpha)$ the preimage of $T_\alpha$, i.e., the set of the spanning trees from the multigraph $(\mathcal V,\mathcal E,\{ nr_e\})$ mapped into the spanning tree $T_\alpha$ of the graph $(\mathcal V,{\mathcal E})$. Then, put
\begin{equation}
w_\alpha=\frac{f^{-1}(T_\alpha)}n
\end{equation}
and
\begin{equation}
\sum_\alpha w_\alpha = \frac kn.
\end{equation}
Thus, a feasible solution of problem (\ref{EqRateMultigraph}) is a feasible solution of problem (\ref{EqRateGraph}) with the same value of the conference key rate. 

\begin{figure}
    \centering
    \includegraphics[scale=1]{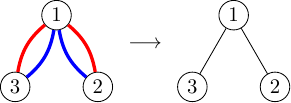}
    \caption{If we ignore the edge multiplicities of a multigraph, several spanning trees of it (here: two spanning trees of the multigraph, which are depicted by red and blue) can map into the same spanning tree of the resulting graph.}
    \label{FigMulti2One}
\end{figure}

From the other side, a feasible solution $\{T_\alpha,w_\alpha\}$ of the problem (\ref{EqRateGraph}) with rational $w_\alpha$ can be mapped to a feasible solution of problem (\ref{EqRateMultigraph}) in the same manner.  
Namely, there exists $n$ such that all $nw_\alpha$ are integer. Then, by construction, we have found $k=n\sum_\alpha w_\alpha$ edge-disjoint spanning trees in the multigraph $(\mathcal V,\mathcal E,\{nr_e\})$. 

Hence, problems (\ref{EqRateMultigraph}) and (\ref{EqRateGraph}) are equivalent.
\end{proof}

We are ready to prove the main result of this subsection.
\begin{lemma}
\label{LemTrees2LP}
    The conference key rate $r_{\rm conf}$ given by (\ref{EqRateMultigraph}) can be expressed as
    \begin{equation}
    \label{EqTreeFromLP}
    r_{\rm conf}=\sum_{e\in\mathcal E}r_e
        -\sum_{i=1}^N R_i,
    \end{equation}
where $R_i$ satisfy constraints (\ref{EqLPCons}).
\end{lemma}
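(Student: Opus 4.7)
The plan is to exhibit explicit $R_i$ built from any optimal spanning tree packing and verify the LP constraints (\ref{EqLPCons}) by direct computation. Let $\{(T_\alpha,w_\alpha)\}$ attain the maximum in Eq.~(\ref{EqRateGraph}), so that $r_{\rm conf}=\sum_\alpha w_\alpha$, and set the residual edge capacities $u_e:=r_e-\sum_{\alpha\colon T_\alpha\ni e}w_\alpha\geq 0$. Denoting by $d_{\alpha,i}$ the degree of vertex $i$ in $T_\alpha$, my candidate is
\begin{equation*}
R_i:=\sum_\alpha w_\alpha\,(d_{\alpha,i}-1)+\frac{1}{2}\sum_{e\in\mathcal E_i}u_e,
\end{equation*}
which is manifestly non-negative, since every $T_\alpha$ spans (so $d_{\alpha,i}\geq 1$) and $u_e\geq 0$.

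First I would check the total equality $\sum_i R_i=\sum_e r_e-r_{\rm conf}$ via two handshaking identities: $\sum_i(d_{\alpha,i}-1)=2(N-1)-N=N-2$ in each tree gives the contribution $(N-2)\,r_{\rm conf}$, while $\sum_i\sum_{e\in\mathcal E_i}u_e=2\sum_e u_e$ gives $\sum_e u_e$. Combining with the edge-count identity $\sum_e r_e=\sum_e u_e+(N-1)\,r_{\rm conf}$ (each $T_\alpha$ consumes exactly $N-1$ edge-slots from the capacities $r_e$) yields the claimed identity.

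Next I would verify constraints (\ref{EqLPCons}) for an arbitrary $I\subsetneq[N]$. Expanding $\sum_{i\in I}R_i$ and counting every edge by its number of endpoints in $I$ (twice for $\mathcal E(I)$, once for $\mathcal E(I,[N]\setminus I)$) yields after rearrangement
\begin{equation*}
\sum_{i\in I}R_i-\sum_{e\in\mathcal E(I)}r_e=\sum_\alpha w_\alpha\,a_\alpha(I)+\frac{1}{2}\,u[\mathcal E(I,[N]\setminus I)]-|I|\,r_{\rm conf},
\end{equation*}
where $a_\alpha(I)$ is the number of edges of $T_\alpha$ with at least one endpoint in $I$ and $u[\cdot]$ denotes the total residual weight on a set of edges. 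Since the middle term is non-negative, it suffices to prove the purely combinatorial bound $a_\alpha(I)\geq|I|$ for every spanning tree $T_\alpha$.

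The main step --- and the only one that is not bookkeeping --- is this inequality. I would argue it by decomposing $T_\alpha$ into the induced subforests $T_\alpha[I]$ and $T_\alpha[[N]\setminus I]$ with component counts $f_I$ and $f_{[N]\setminus I}$ respectively. Counting the $N-1$ edges of $T_\alpha$ then shows that the number of cross-edges equals $f_I+f_{[N]\setminus I}-1$, whence
\begin{equation*}
a_\alpha(I)=(|I|-f_I)+(f_I+f_{[N]\setminus I}-1)=|I|+f_{[N]\setminus I}-1\geq|I|,
\end{equation*}
the final inequality using only that $[N]\setminus I\neq\emptyset$. Weighting by $w_\alpha$ and summing over $\alpha$ gives $\sum_\alpha w_\alpha a_\alpha(I)\geq|I|\,r_{\rm conf}$, completing the verification of (\ref{EqLPCons}) and hence the lemma.
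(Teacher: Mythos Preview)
Your proof is correct and follows essentially the same route as the paper: you choose the identical candidate $R_i=\sum_\alpha w_\alpha(d_{\alpha,i}-1)+\tfrac12\sum_{e\in\mathcal E_i}u_e$, verify the total by handshaking, and reduce the constraints to the same combinatorial inequality $a_\alpha(I)=|\mathcal E_\alpha(I)|+|\mathcal E_\alpha(I,[N]\setminus I)|\geq |I|$. The only cosmetic difference is in the last step: the paper bounds the cross edges from below by the number of components of $T_\alpha[I]$, whereas your global edge count yields the exact identity $a_\alpha(I)=|I|+f_{[N]\setminus I}-1$, which is a slightly cleaner way to reach the same conclusion.
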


According to the Csisz\'{a}r-Narayan construction, $R_i$ have the meaning of the amount of information sent by the $i$th participant (per round) to the public channel in the spanning-tree-packing algorithm. The proof is based on explicit calculation of the information announced by the participants. We will give the corresponding explanations in the proof, though the proof of a formal statement of the lemma does not rely on this interpretation of $R_i$.

\begin{proof}
    
Denote $d^{(\alpha)}_i$ the nodal degree of the vertex $i$ in the $\alpha$th spanning tree. According to the algorithm, for the $\alpha$th spanning tree, the participant $i$, announce $R_i^{(\alpha)}=d^{(\alpha)}_i-1$ bits of information. As we discussed after Eq.~(\ref{EqRateGraph}), the weight $w_\alpha$ of the $\alpha$th spanning tree is a fraction of using this spanning tree, i.e., for large number $n$ of rounds, the $\alpha$th spanning tree is used approximately $w_\alpha n$ times. So, the participant $i$ announces on average $\sum_{\alpha}w_\alpha R_i^{(\alpha)}$ bits per round. 

In addition, as it was mentioned in Sec.~\ref{SecUpBnd}, not always the full capacity $r_e$ of all edges is used (like in the bottom line of Fig.~\ref{FigPack}). Thus, the differences
\begin{equation}
    \sum_{e\ni i}
    \Big(
    r_{e}-
    \sum_{\alpha\colon T_\alpha\ni e} w_\alpha 
    \Big)
\end{equation}
for all edges $e$ can be nonzero. In other words, in general, there are constraints in the spanning-tree-packing problem (\ref{EqRateGraph}) satisfied as strict inequalities. 

Constraints in (\ref{EqLP}) assume ``omniscience'': participants must be able to recover all bipartite secret bits. In this paradigm, the participants have to announce the bipartite secret bits not used in the conference key propagation protocol. Each ``leftover'' bipartite secret bit is shared by two participants. Let us demand that, on average, one half of this bit is announced by one participant and one half is announced by the other one. That is, in one half rounds this bit is announced by one participant and in one half of rounds -- by the other one.
Then, the total amount of information announced by the $i$th participant per round is
\begin{equation}
\label{EqRi}
    R_i=
    \sum_{\alpha}
    w_\alpha
    (d^{(\alpha)}_i-1)
    +\frac12
    \sum_{e\ni i}
    \Big(
    r_{e}-
    \sum_{\alpha\colon T_\alpha\ni e} w_\alpha 
    \Big).
\end{equation}

We have defined $R_i$ and now start the proof that, with such $R_i$, Eq.~(\ref{EqTreeFromLP}) and constraints (\ref{EqLPCons}) are satisfied. The proof will not use the interpretation of $R_i$ as the amount of information announced by the participants.

For the right-hand side of Eq.~(\ref{EqTreeFromLP}), we have
\begin{equation}
\begin{split}
    \sum_{e\in\mathcal E}r_e-\sum_{i=1}^N R_i
    &=
    \sum_{e\in\mathcal E} r_{e}
    -
    \sum_{i=1}^N
    \sum_{\alpha}
    w_\alpha
    (d^{(\alpha)}_i-1)
    \\
    &-
    \sum_{e\in\mathcal E}
    \Big(
    r_{e}-
    \sum_{\alpha\colon T_\alpha\ni e}
	w_\alpha      
    \Big)\\    
    &=
    \sum_\alpha w_\alpha
    \left(
    \sum_{e\in T_\alpha}
    1
    +N
    -
    \sum_{i=1}^N
    d^{(\alpha)}_i
    \right).
\end{split} 
\end{equation}
Each spanning tree has $N-1$ edges, hence $\sum_{e\in T_\alpha}1=N-1$ for all $\alpha$. Also, the sum of nodal degrees of vertices in a subgraph is the number of edges in this subgraph multiplied by two (since each edge is counted twice), hence $\sum_i d^{(\alpha)}_i=2(N-1)$ and 
\begin{multline}
    \sum_{e\in\mathcal E}r_e-\sum_{i=1}^N R_i
	=   
	\left(\sum_\alpha w_\alpha\right) 
	[N-1+N-2(N-1)]
	\\
	=
	\sum_\alpha w_\alpha=r_{\rm conf}.
\end{multline}
We have proved Eq.~(\ref{EqTreeFromLP})

Now we need to prove that $R_i$ given by Eqs.~(\ref{EqRi})  satisfy constraints (\ref{EqLPCons}). The substitution gives for an arbitrary nonempty $I\subsetneq[N]$:
\begin{equation}
\label{EqLemTreeLPCons}
    \sum_{i\in I}
    R_i
    =
    \sum_{\alpha}
    w_\alpha
    \sum_{i\in I}
    (d^{(\alpha)}_i-1)
    +
    \frac{1}{2}
    \sum_{i\in I}
    \sum_{e\ni i}
    \Big(
    r_{e}-
    \sum_{\alpha\colon T_\alpha\ni e} w_\alpha 
    \Big).
\end{equation}
Now, for each spanning tree, we have
\begin{equation}
\label{EqLemTreeLPCons1}
    \sum_{i\in I} d^{(\alpha)}_i=2|\mathcal E_\alpha(I)|+
    |\mathcal E_\alpha(I,[N]\backslash I)|,
\end{equation}
where $\mathcal E_\alpha(I)$ is the set of edges of the subgraph of the $\alpha$th spanning tree induced by vertices $I$ and $\mathcal E_\alpha(I,[N]\backslash I)$ is the set of edges in the $\alpha$th spanning tree connecting vertices from $I$ to vertices outside $I$. 

For the last sum in Eq.~(\ref{EqLemTreeLPCons}), we have
\begin{eqnarray}
    \frac{1}{2}
    \sum_{i\in I}
    \sum_{e\ni i}
    \Big(
    r_{e}&-&
    \sum_{\alpha\colon T_\alpha\ni e}
    w_\alpha 
    \Big)
    \nonumber
    \\&=&
    \sum_{e\in\mathcal E(I)}
    \Big(
    r_{e}-
    \sum_{\alpha\colon T_\alpha\ni e}
    w_{\alpha}
    \Big)   
    \nonumber
    \\&+&
    \frac{1}{2}
    \sum_{e\in\mathcal E(I,[N]\backslash I)}
    \Big(
    r_{e}-
    \sum_{\alpha\colon T_\alpha\ni e}
    w_{\alpha}
    \Big)
    \nonumber
    \\
    &\geq&
    r[{\mathcal E}(I)]
    -
    \sum_{\alpha}
    w_\alpha
    |\mathcal E_\alpha(I)|.
\label{EqLemTreeLPCons2}
\end{eqnarray}
Substitution of Eq.~(\ref{EqLemTreeLPCons1}) and Eq.~(\ref{EqLemTreeLPCons2}) into Eq.~(\ref{EqLemTreeLPCons}) gives
{}
\begin{equation}
    \sum_{i\in I}R_i
    \geq 
    r[{\mathcal E}(I)]
    +
    \sum_{\alpha}
    w_\alpha
    \Big[|\mathcal E_\alpha(I)|+|\mathcal E_\alpha(I,[N]\backslash I)|-|I|\Big].
\end{equation}

To finish the proof of Eqs.~(\ref{EqLPCons}), it is sufficient to prove that
\begin{equation}
\label{EqTreeSubgraphIneq}
|\mathcal E_\alpha(I)|+|\mathcal E_\alpha(I,[N]\backslash I)|\geq|I|.    
\end{equation}
The left-hand side is the number of edges incident on the vertices from $I$ in the $\alpha$th spanning tree. Consider the subgraph $(I,\mathcal E_{\alpha}(I))$ of the $\alpha$th spanning tree induced by the vertices $I$. It is not necessarily connected: some vertices from $I$ can be connected via vertices outside $I$. 

For the clarity of arguments, consider first the case that the subgraph $(I,\mathcal E_{\alpha}(I))$ is connected. Then it is also a tree and thus contains $|I|-1$ edges. Also the set $\mathcal E_\alpha(I,[N]\backslash I)$ has at least one element, which connects the vertices from $I$ to the vertices outside $I$ (since each spanning tree is a connected graph). Hence, Eq.~(\ref{EqTreeSubgraphIneq}) is true in this case. 

Consider now the general case where the subgraph $(I,\mathcal E_{\alpha}(I))$ is not necessarily connected.  Denote $\{I_\mu\}$ the subsets of $I$ forming connected components of this subgraph. Then, again, each connected component contains is a tree and contains $|I_\mu|-1$ edges. Also, for each $\mu$, there exists at least one edge of the $\alpha$th spanning tree connecting the vertices from $I_\mu$ to the vertices outside $I_\mu$ (and outside $I$ because otherwise such vertices would belong to the same connected component). Thus,
\begin{equation}
|\mathcal E_\alpha(I)|+|\mathcal E_\alpha(I,[N]\backslash I)|\geq
\sum_\mu
[(|I_\mu|-1)+1]
=|I|.    
\end{equation}
This finishes the proof of Eqs.~(\ref{EqTreeSubgraphIneq}), (\ref{EqLPCons}) and thus the whole lemma.
\end{proof}



%

\end{document}